\newcommand{\concreteDomain}{\ensuremath\mathsf{D}}
\newcommand{\Interpretation}{\ensuremath\mathsf{I}}
\renewcommand{\dep}[1]{\ensuremath{#1^{\mathtt{dep}}}}
\newcommand{\eqterm}{\approx}
\newcommand{\EC}[1]{\ensuremath{[#1]}}
\newcommand{\ECsem}[1]{\ensuremath{[#1]_{\eqterm}}}
\newcommand{\ECS}{\ensuremath\mathsf{EC}}
\newcommand{\policy}{\ensuremath\mathcal{P}}
\newcommand{\declenv}{\ensuremath\mathit{DE}}
\newcommand{\declpairs}{\ensuremath\mathit{DP}}
\newcommand{\depset}{\ensuremath\mathcal{D}}
\newcommand{\lvl}{\ensuremath\mathit{lvl}}
\newcommand{\vars}{\ensuremath\mathit{pvars}}
\newcommand{\deps}{\ensuremath\mathit{deps}}
\newcommand{\wrap}{\ensuremath\mathit{wrap}}
\newcommand{\Sorts}{\ensuremath\mathcal{T}}
\newcommand{\PVars}{\ensuremath\mathcal{PV}}
\newcommand{\Preds}{\ensuremath\mathcal{P}}
\newcommand{\Funcs}{\ensuremath\mathcal{F}}
\newcommand{\LVars}{\ensuremath\mathcal{LV}}
\renewcommand{\val}{\ensuremath\mathit{val}_{M,s,\beta}}
\newcommand{\condterm}[3]{\ensuremath{\mathit{if}({#1})\mathit{then}({#2})\mathit{else}({#3})}}
\newcommand{\draftnote}[1]{}
\begin{document}
\mainmatter
\title{Dependency-Based Information Flow Analysis with Declassification in a Program Logic}
\author {Bart van Delft$^1$
    \and Richard Bubel$^2$ }
\institute {Chalmers University of Technology, Sweden, \email{vandeba@chalmers.se}
        \and Technical University Darmstadt, Germany, \email{bubel@cs.tu-darmstadt.de}}
\maketitle

\begin{abstract}
We present a deductive approach for the analysis of secure information flows with support for fine-grained policies that include declassifications in the form of delimited information release.
By explicitly tracking the dependencies of program locations as a computation history, we maintain high precision, while avoiding the need for comparing independent program runs. By considering an explicit heap model, we argue that the proposed analysis can straightforwardly be applied on object-oriented programs.
\end{abstract}

\section{Introduction}
\label{sec:introduction}

An information-flow analysis tracks where and how information flows within an application to control which inputs may or may not affect what outputs.
The typical information-flow  policy that no information may flow from sensitive locations to less sensitive ones~\cite{Cohen1977} is known as \emph{non-interference}~\cite{goguen1982security}.
In practice this policy turns out to be too strong for many programs: even a simple password checker needs to reveal whether the entered password is equal to the actual password or not.
It is therefore common to enforce a more liberal policy that allows for controlled release of information: a policy with declassifications.

In our previous example, a declassification policy would specify that the outcome of the equality check with the actual password may be leaked, but no other information (in particular, the actual password must not be leaked).
This kind of declassification can be refined with conditions stating under which circumstances certain information may be made public and is called \emph{delimited information release}.\draftnote{we refer to it as?}
In this paper we consider a (very) simplified program that decrypts a ciphertext but only returns the result if it has a valid padding and the integrity check, i.e. the checksum, succeeds.
This program, shown in Fig.~\ref{fig:decrypt}, complies with a policy with two declassifications: we unconditionally declassify whether both checks hold (but not whether the \emph{individual} checks hold), and we declassify the decrypted message on the condition that both checks hold.

Information-flow analysis can be rephrased as determining the dependencies that a program introduces between the input and output locations~\cite{Cohen1977}.
That is, determining whether sensitive information flows to a less sensitive location can be rephrased as determining whether the final value of that less sensitive location depends on any sensitive input.

We present a logic-based approach to information-flow analysis, which improves on earlier work~\cite{BHW09} by one of the co-authors, which focused on automating software verification by integrating abstract interpretation to generate loop invariants. Non-interference checking was used as a show-case for their framework. Their logic allowed to identify the dependencies in commands as \java{x := y; z := x} ({\tt z} depends on {\tt y}, not on {\tt x}) and {\tt x := y; x := 8} ({\tt x} does not depend on anything) which are typically over-approximated by e.g. flow-insensitive type systems for information-flow~\cite{broberg2013paragon,HuntS06,VolpanoS1997}.

However, their approach tracks dependencies on memory locations  directly and performs over-approximations when syntactically inspecting the variables occurring in expressions.  The logic presented in this paper increases precision by maintaining some information on how variable dependencies contributed to the value of an expression and allows us to formulate the desired fine-grained security conditions. In more detail our contributions are:
\begin{figure}[t]
  \centering
\begin{minipage}[b]{0.50\textwidth}
\fontsize{8}{8}\nolstbar\begin{lstjava}
msg = cipher * key;
paddingOk = msg 
if (paddingOk) {
    checkSum = (msg / 256) 
    if (checkSum != -1) {
        result = msg;
    } else {
        result = -1;
    }
} else {
    result = -1;
}
\end{lstjava}
    \caption{Simple 'decryption' algorithm:
Decrypted ciphertext is only revealed if padding \emph{and} checksum of \code{msg} are valid.}
    \label{fig:decrypt}
  \end{minipage}
 \hfill
  \begin{minipage}[b]{0.465\textwidth}\centering
\fontsize{8}{8}\nolstbar\begin{lstjava}
objA.f = 1;
objB.f = 1;
if (h == 0) {
  objC = objA; 
} else {
  objC = objB;
}
objC.f = 17;
\end{lstjava}
    \caption{Information-flow via aliasing: Information flows from~{\tt h} to {\tt objA.f} without an assignment using variable {\tt objA}.}
    \label{fig:alias}
  \end{minipage}
\end{figure}
%
(i) Tracking of dependencies using a term structure instead of sets of variables to provide higher accuracy of the information-flow analysis:
    Approaches like ~\cite{BHW09} add for programs like \java{x=y-y} or \java{if (y) \{x=8\} else \{x=8\}} variable {\tt y} as dependency of {\tt x}, but our analysis does not.
(ii) Support for precise conditional declassification policies:
      For the program in Fig.~\ref{fig:decrypt} we can specify that \java{result} may depend on \java{cipher * key} only if both padding and integrity check succeed.
(iii) Extensible to an object-oriented setting (hints are given in the paper) to analyse complex information flows as shown in Fig.~\ref{fig:alias}.
%
The presented program logic can be integrated into the  loop invariant generation algorithm presented in~\cite{BHW09}.
\paragraph{Paper outline.}
We first introduce  syntax and semantics of both programs and the program logic in Sect.~\ref{sec:syntax:semantics}, as well as a sequent calculus to prove the validity of properties specified in the program logic.
In Sect.~\ref{sec:security} we extend the semantics of the program logic to encode security conditions (non-interference and delimited information release) as program logic formulas.
Sect.~\ref{sec:calculus} presents an extension to the sequent calculus to prove the validity of these formulas.
In Sect.~\ref{sec:example} the program logic and calculus are applied to the programs in Figs.~\ref{fig:decrypt} and~\ref{fig:alias}.
The Sects.~\ref{sec:related} and~\ref{sec:conclusions} conclude the paper by discussing related and future work.

\section{Syntax and semantics}
\label{sec:syntax:semantics}

This section introduces an imperative programming language, a program logic to specify general functional properties and a calculus to verify that a program satisfies its specification (mostly taken from~\cite{BHW09}). The formal treatment of our extensions is based on these definitions. At the end of the section we extend our programming language by adding object-oriented features in a semi-formal manner,  which serves us to demonstrate how to extend our approach to a programming language with a heap.

We define now the syntax of our programming language and logic. The signature of the logic is defined as $
\Sigma =\{\Sorts, \alpha, \Preds, \Funcs, \PVars, \LVars\}
$ with 
$T \in \Sorts$ a sort name; $\alpha$ assigns an arity to
(i)  each \emph{predicate symbol} $p \in \Preds$ as $\alpha(p)=T_1\times\ldots\times T_n$, 
(ii)  each \emph{function symbol} $f \in \Funcs$ as $\alpha(p)=T_1\times\ldots\times T_n\rightarrow T$ (constants are modelled as functions with no arguments),
(iii)  each \emph{program variable} ${\tt x} \in \PVars$ as $\alpha({\tt x}) \in \Sorts$ and
(iv)  each \emph{logical variable} $y \in \LVars$ as $\alpha(y)\in\Sorts$.
  
Logical variables can be bound by quantifiers and are \emph{not} allowed to occur in programs. Logical variables and all function symbols are \emph{rigid} symbols, whose value cannot be changed by programs.
In contrast, program variables cannot be bound by quantifiers, but are allowed to occur in programs.
Program variables are \emph{non-rigid} symbols, i.e., their value is state-dependent and can be changed by program execution.  Further, there are predefined sorts $\top\in\Sorts$ and $\java{int}\in\Sorts$  and predefined predicate symbols $<,\leq,\geq,>$ and function symbols $+,-,\times,/,\%$ with their canonical arithmetic signature.

\begin{definition}[Syntax]
Program {\tt p}, updates $u$, terms $t$ and predicates $\varphi$  are defined as all well-typed words generated by the following grammar:
\begin{align*}
  {\tt p} & ::= {\tt x} = t ~|~ \code{if (}\varphi\code{) \{ p \} else \{ p \}} ~|~ \code{while (}\varphi\code{) \{p\}} ~|~ \code{p; p} \\
  u & ::= \elUp{{\tt x}}{t} ~|~ \parUp{u}{u} \\
  t & ::= {\tt x} ~|~ y ~|~ f(t,\ldots,t) ~|~  t \bullet t~(\bullet\in\{+,-,\times,/,\%\})  ~|~ \condterm{\varphi}{t}{t} ~|~ \applyUp{u}{t} \\
  \varphi & ::= \keytrue ~|~ \keyfalse ~|~ p(t,\ldots,t) ~|~  t \doteq t  ~|~ \keynot\varphi ~|~ \varphi \circ \varphi ~|~ \exists y.\varphi ~|~ \forall y.\varphi  ~|~ \\
               & \phantom{::=\ } \condterm{\varphi}{\varphi}{\varphi} ~|~ \applyUp{u}{\varphi} ~|~ \dlbox{\tt p}{\varphi}   \\
  &\text{where} \quad \circ\in\{\keyand,\keyor,\keyimplies\}, y \in\LVars, {\tt x} \in\PVars, f \in \Funcs, p \in \Preds
\end{align*}
Terms and formulas that appear inside programs may not contain any logical variables, quantifiers, updates, or nested programs. 
\end{definition}

Like programs, updates can be used to describe modifications of the non-rigid symbols (i.e. program variables).
That is, $\dlbox{\tt p}{\varphi}$ denotes that $\varphi$ should be evaluated in a context where the value of the program variables has changed according to {\tt p}.
Similarly, in $\applyUp{u}{\varphi}$ the update $u$ describes how the values of program variables change before evaluating $\varphi$. Intuitively, updates can be seen as a kind of explicit generalized substitutions, which are simplified during and their application is delayed until the program has been symbolically executed.

An elementary update $u$ is a pair $\elUp{{\tt x}}{t}$ where {\tt x} is a program variable and $t$ a term of the same sort as {\tt x}.
Elementary updates $u_1, u_2$ can be combined to parallel updates $\parUp{u_1}{u_2}$.
\begin{example} To explain the intuition behind updates we give a few examples.
\begin{itemize}
\item $\applyUp{\elUp{\java{i}}{\java{a}+1}}{\java{i}>\java{a}}$ expresses that the program variable $\java{i}$ is greater than $\java{a}$ when evaluated in a state where $\java{i}$ has been set to ${\tt a} + 1$.
\item ${\it oldi}\doteq {\it oldj} \limplies \applyUp{\parUp{\elUp{\java{i}}{\java{j}}}{\elUp{\java{j}}{\java{i}}}}{(\java{i}\doteq {\it oldj}\land \java{j}\doteq {\it oldi})}$ expresses that the values of \java{i} and \java{j} are swapped. Note: Parallel updates are `executed' simultaneously without effecting each other. 
\item $\applyUp{\parUp{\elUp{\java{x}}{4}}{\elUp{\java{x}}{8}}}{\java{x} \doteq 8}$ evaluates to true; i.e. in case of competing updates, the right-most update wins.
\end{itemize}
\end{example}

To give semantics to dynamic logic formulas, we require a first-order structure to give meaning to sorts, functions and predicates, a state to give meaning to program variables and a variable assignment to give meaning to logical variables.

A \emph{first-order structure} $M$ is a pair $(\concreteDomain,\Interpretation)$ where $\concreteDomain$ is a domain, i.e. a non-empty set of elements.
The interpretation function $\Interpretation$ assigns
\begin{itemize}
\item each sort $T\in\Sorts$ a non-empty set $\Interpretation(T)\subseteq\concreteDomain$ of elements. 
\item each function symbol $f:T_1\times\ldots\times T_n\rightarrow T\in\Funcs$ a total function $\Interpretation(f):\Interpretation(T_1)\times\ldots\times\Interpretation(T_n)\rightarrow\Interpretation(T)$; and 
\item each predicate symbol $p:T_1\times\ldots\times T_n\in\Preds$ a relation $\Interpretation(p)\subseteq\Interpretation(T_1)\times\ldots\times\Interpretation(T_n)$
\end{itemize}
In particular, $I$ is fixed for sorts $\top$ and $\java{int}$ such that $I(\top)=\concreteDomain$ and $I(\java{int})=\mathbb{Z}$. In addition $\Interpretation$ is also fixed w.r.t\ equality $\Interpretation(\doteq)(v_1,v_2)=\semtrue$ iff.\ $v1=v2$ and functions like $+,-,\times,/,\%$ which are interpreted as the standard arithmetic operations.

Given an interpretation function $\Interpretation$, a \emph{state} $s$ assigns each program variable $\code{x}\in\PVars$ of sort $T$ a value $s(\code{x})\in\Interpretation(T)$; the set of all states is denoted $\States$.
A \emph{variable assignment} $\beta$ assigns each logical variable $y \in \LVars$ of sort $T$ a value $\beta(y)\in\Interpretation(T)$.

\begin{definition}[Semantics]\footnote{$\val$ is defined formally in Appendix~\ref{app:semantics:basic}} 
Given a first-order structure $M$, a state $s$ and a variable assignment $\beta$,
the semantics is defined by  function $\val$ evaluating
\begin{itemize}
  \item terms to a value $\val(t) \in \concreteDomain$,
  \item formulas to a truth value $\val(\varphi) \in \{\semtrue, \semfalse\}$,
  \item updates to a result state $\val(u) \in \States$, and
  \item programs to a set of states $\val({\tt p}) \in 2^{\States}$ with cardinality either 0 or 1.
\end{itemize}
A formula $\varphi$ is called \emph{valid} iff.\ $\val(\varphi) = \semtrue$ for all first-order structures $M$, states $s$ and variable assignments~$\beta$.
\end{definition}

A diverging program evaluates to the empty set. As our programming language is deterministic, a terminating program has exactly one result state.

\begin{example} Example evaluations with $s = \{ \java{x} \mapsto 4 \}$ and $\beta = \{ y \mapsto 3\}$. \\[-1.5em]
\begin{itemize}
\item $\val(y + 3) = 6$
\item $\val(\dlbox{\java{x = 1; x = 2}}{\java{x} \doteq 1}) = \semfalse$
\item $\val(\elUp{\java{x}}{\java{x} + 1}) = \{\java{x} \mapsto 5\}$
\item $\val(\java{x = 1; while (x > 0) \{ x = x + 1; \}}) = \emptyset$
\item $\val(\java{x = 1; while (x > 0) \{ x = x - 1; \}}) = \{ \{ \java{x} \mapsto 0 \} \}$\draftnote{The $\{\}$ notation is used for updates, normal sets, mappings and sets of dependencies...}
\end{itemize}
\end{example}

As mentioned earlier, we add now some object-oriented features to our programming language. This allows us later to demonstrate how to extend our approach to programming language with heaps (and hence, aliasing). The syntax of our programming language is extended by the following three statements: 
\[
{\tt p} ::= ...~|~{\tt x.f} = t~|~{\tt x} = t.{\tt f}~|~{\tt x} = {\tt new} 
\]
which allow to assign a value to the field $f$ of an object $x$, to read the value of an object field and to create an object. Fields are simple names, where different names identify different fields. For the modelling on the logic level we require the existence of four additional sorts: \java{Object} (the sort of all objects), \java{Field} (whose domain elements represent fields),  a \java{Heap} datatype to model the heap and the sort \java{HeapValue} being the smallest supersort of all sorts except \java{Heap}. The heap is modelled using the theory of arrays, i.e., there are two additional function symbols $\mathtt{store}:\java{Heap}\times\java{Object}\times\java{Field}\times\java{HeapValue}\rightarrow\java{Heap}$ and $\mathtt{select}:\java{Heap}\times\java{Object}\times\java{Field}\rightarrow\java{HeapValue}$. Further there is a global program variable $\java{heap}\in\PVars$, which refers to the current heap and is updated or queried each time a program writes to or reads from the heap. 
\begin{example} Given two program variables ${\tt o},{\tt u}\in\PVars$ of sort \java{Object} and two unique constants \java{a}, \java{b} of sort \java{Field} (unique constants refer to different domain elements iff.\ their name is different), then the term
	\begin{itemize}
		\item $\mathtt{store}(\java{heap}, \java{o}, \java{a}, 5)$ represents the updated \java{heap} after assigning \java{o.a} the value $5$.
		\item $\mathtt{select}(\mathtt{store}(\java{heap}, \java{o}, \java{a}, 5), \java{u}, \java{b})$ retrieves the value of \java{u.b} from the heap given as first argument. Since \java{a} and \java{b} refer to different fields the term is equivalent to  $\mathtt{select}(\java{heap}, \java{u}, \java{b})$.
	\end{itemize}	
\end{example}

\subsection{Sequent calculus}
\label{sec:calculus:intro}

To reason about the validity of formulas in our logic, we use a Gentzen-style sequent calculus~\cite{Gentzen35}. In particular, we build upon the calculus used for JavaCardDL \cite{BeckertHS07}. 
The basic data structure for our calculus is a \emph{sequent} $\sequent{}{}$ where $\Gamma,\Delta$ are sets of formulas. A sequent is valid iff.\ the formula $\bigwedge_{\phi\in\Gamma}\phi\rightarrow\bigvee_{\psi\in\delta}\psi$ is valid. A \emph{sequent calculus rule} is an instance of the rule schemata:
\[
\seqRule{name}{\overbrace{\seq{\Gamma_1}{\Delta_1}\quad\ldots\quad\seq{\Gamma_n}{\Delta_n}}^{\mathit{premises}}}{\underbrace{\sequent{}{}}_{\mathit{conclusion}}}
\]
Rules with an empty premise are called \emph{axioms}. A sequent proof is a tree where each node is labeled with a sequent and the root node is labeled with the sequent to be proven. Each node (except the root) is the result of a rule application. Let node $n$ be the $i$-th child of node $p$, then there is a rule $r$ such that the conclusion of $r$ matches the sequent of the parent of $n$ and the sequent of $n$ is equal to the instantiated $i$-th premise of $r$. A branch is \emph{closed} if on one of its nodes (usually the last one) an axiom has been applied. A proof is \emph{closed} if all its branches are closed. The calculus is sound, if it only allows to construct closed proofs when the root sequent is valid. 

The calculus rules for first-order logic are standard and not given here, but we explain the basic idea for the rules dealing with programs (see~\cite{BeckertHS07} for more details). Our calculus follows the symbolic execution paradigm and models a symbolic interpreter for the programming language. The sequent rule for a conditional statement looks like
$$\footnotesize
\seqRule{\scriptsize conditional}
{\sequent{\varphi}{\dlbox{\code{p;}\ldots}{\phi}}\qquad
	\sequent{\code{!}\varphi}{\dlbox{\code{q;}\ldots}{\phi}}
}{
\sequent{}{\dlbox{\code{if (}\varphi\code{) \{p;\} else \{q;\};} \ldots}{\phi}}
}
$$
which splits the proof into two parts, one where the condition is assumed to hold and consequently, the then-branch of the conditional statement is executed followed by the remaining program and the second one dealing with the case that the condition evaluates to false. 

We give now the assignment rules for local variables and heaps 
$$\footnotesize
\begin{array}{c@{\hspace*{0em}}c}
\seqRuleW{\scriptsize assign_{\mathit{local}}}
{
  \sequent{}{\applyUp{u}{\applyUp{\elUp{\java{v}}{t}}{\dlbox{\code{\ldots}}{\phi}}}}
}{
\sequent{}{\applyUp{u}{\dlbox{\code{v =}~t\code{; \ldots}}{\phi}}}
} & 
\seqRuleW{\scriptsize assign_{\mathit{field}}}
{
  \sequent{}{ 
			\applyUp{u}{\applyUp{\elUp{\java{heap}}{\java{store(heap,o,a,}t\code{}}}{\dlbox{\code{\ldots}}{\phi}}}
  }
}
{
	\sequent{}{\applyUp{u}{\dlbox{\code{o.a =}~t\code{; \ldots}}{\phi}}}
}
\end{array}
$$
The  rule $\ruleName{assign_{\mathit{local}}}$ turns an assignment to a local variable directly into an update, while  rule $\ruleName{assign_{field}}$  updates the heap. During symbolic execution updates accumulate in front of the program modality until symbolic execution finishes and are then applied to the postcondition similar to a substitution. This delayed application allows us to perform simplifications on updates before their application and avoids case splits as well as introduction of fresh variables after each assignment. The calculus rules for update simplification and application can be found in~\cite{BeckertHS07}.

\section{Security conditions}
\label{sec:security}

In this section we define information-flow policies  as formulas in our logic. First  a definition based on program variable dependencies for non-interference following the work in~\cite{BHW09} is given, which is then extended to express the more fine-grained conditions of delimited information release.  

\subsection{Non-interference as a program logic formula}

With the semantics in place, we can give a more formal definition of non-interference.
An observer with security level $l$ can observe all information of levels $l'$ `below' $l$, that is information labelled with level $l' \sqsubseteq l$.
Given a policy $\policy = (L, \sqsubseteq)$ (a set of levels $L$ with a partial relation $\sqsubseteq$ between them) we assume a mapping $\mathit{lvl}$ from program variables to their intended security level.
This leads to the following definition of equivalent states for an observer on level $l$:

\begin{definition}[$l$-equivalent states]
  Given $\policy = (L,\sqsubseteq)$, let $l$ be a security level in $L$.
  Two states $s_1, s_2$ are $l$-equivalent, written $s_1 \approx_l s_2$, iff.\  for all program variables ${\tt x}$ with $\mathit{lvl}({\tt x}) \sqsubseteq l$, it holds that $s_1({\tt x}) = s_2({\tt x})$.   
\end{definition}
A program {\tt p} is defined non-interfering if it preserves $l$-equivalence for all levels~$l$:
\begin{definition}[Non-interference]
  Let $\policy = (L,\sqsubseteq)$.
  Given a program {\tt p}, any two states $s_1, s_2$, any first-order structures $M$ and any variable assignments $\beta$,
  such that $\textit{val}_{M,s_1,\beta}({\tt p}) = \{s_1'\}$ 
       and  $\textit{val}_{M,s_2,\beta}({\tt p}) = \{s_2'\}$.
  Then {\tt p} is said to be \emph{non-interfering} with respect to $\policy$
   iff for all $l \in L$, when $s_1 \approx_l s_2$ then $s_1' \approx_l s_2'$.\label{def:noninf}
\end{definition}

Definition~\ref{def:noninf} is  \emph{termination insensitive}, because we assume termination of program {\tt p}.
As a consequence, an observer may learn more information via the (non-)termination of {\tt p} than specified by the policy $\policy$~\cite{Askarov+:ESORICS08}.

It is well-known that the non-interference of a program can be phrased in terms of the dependencies that the program introduces between the initial and final values of program variables~\cite{Cohen1977,HuntS06}.
That is, if the final value of {\tt x} depends on the initial value of {\tt y}, we know that there is interference from $\lvl({\tt y})$ to $\lvl({\tt x})$ and we should therefore check that $\lvl({\tt y}) \sqsubseteq \lvl({\tt x})$.

We present as a small generalisation the dependencies of terms, rather than just program variables since we allow for the declassification on the granularity of terms and not just whole variables in the next section.
Clearly, by choosing the term to be a single program variable the following definition is equivalent to dependencies for variables.

\begin{definition}[Term dependencies]
  Given a program {\tt p} and a term $t$.
  The \emph{term dependencies} of a term $t$ under {\tt p} is defined as the smallest set $\depset(t, {\tt p}) \subseteq \PVars$ of program variables such that 
  for all states $s_1$, $s_2$, any first-order structures $M$ and any variable assignments $\beta$,
  with $\textit{val}_{M,s_1,\beta}({\tt p}) = \{s_1'\}$ 
       and  $\textit{val}_{M,s_2,\beta}({\tt p}) = \{s_2'\}$, we have
    if for all ${\tt y} \in \depset(t, {\tt p})$  $s_1({\tt y}) = s_2({\tt y})$, 
    then  $\mathit{val}_{M, s'_1 , \beta}(t) = \mathit{val}_{M, s'_2 , \beta}(t)$.\label{def:termdeps}
\end{definition}
We can now define non-interference in terms of dependencies as argued:
\begin{definition}[Dependency-Based Non-Interference]
  Given a policy $\policy = (L, \sqsubseteq)$.
  A program {\tt p} is non-interfering with respect to $\policy$ iff for each variable ${\tt x} \in \PVars$,
    for all variables ${\tt y} \in \depset({\tt x}, {\tt p})$,  it holds that $\lvl({\tt y}) \sqsubseteq \lvl({\tt x})$.
\label{def:depnoninf}
\end{definition}

\begin{lemma}
  Definitions \ref{def:noninf} and \ref{def:depnoninf} are equivalent.\label{lem:niasdep} Proof in Appendix~\ref{app:proofs:lem:niasdep}.
\end{lemma}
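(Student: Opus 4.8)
The plan is to prove the two implications separately, establishing that Definition~\ref{def:depnoninf} implies Definition~\ref{def:noninf} and conversely. The key tool in both directions is Definition~\ref{def:termdeps} specialised to the case where the term is a single program variable, together with the observation that $l$-equivalence of states is exactly a conjunction of equalities on the program variables below $l$.

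First I would show that Definition~\ref{def:depnoninf} implies Definition~\ref{def:noninf}. Fix a policy $\policy = (L,\sqsubseteq)$, a program {\tt p} satisfying Definition~\ref{def:depnoninf}, states $s_1,s_2$, a structure $M$ and an assignment $\beta$ with $\val[M,s_1,\beta]({\tt p}) = \{s_1'\}$ and $\val[M,s_2,\beta]({\tt p}) = \{s_2'\}$, and a level $l$ with $s_1 \approx_l s_2$. To show $s_1' \approx_l s_2'$, pick any ${\tt x}$ with $\lvl({\tt x}) \sqsubseteq l$; I must show $s_1'({\tt x}) = s_2'({\tt x})$. By hypothesis, every ${\tt y} \in \depset({\tt x},{\tt p})$ satisfies $\lvl({\tt y}) \sqsubseteq \lvl({\tt x}) \sqsubseteq l$, hence $\lvl({\tt y}) \sqsubseteq l$, so $s_1({\tt y}) = s_2({\tt y})$ by $s_1 \approx_l s_2$. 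Thus all dependencies of ${\tt x}$ agree on the initial states, and Definition~\ref{def:termdeps} (with $t = {\tt x}$, noting $\val[M,s',\beta]({\tt x}) = s'({\tt x})$) gives $s_1'({\tt x}) = s_2'({\tt x})$ as required.

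For the converse, assume {\tt p} is non-interfering in the sense of Definition~\ref{def:noninf} but, for contradiction, that some ${\tt x}$ has a dependency ${\tt y} \in \depset({\tt x},{\tt p})$ with $\lvl({\tt y}) \not\sqsubseteq \lvl({\tt x})$. Set $l = \lvl({\tt x})$. The idea is to exploit the \emph{minimality} of $\depset({\tt x},{\tt p})$: because ${\tt y}$ lies in this smallest set, it is actually relevant, so $\depset({\tt x},{\tt p}) \setminus \{{\tt y}\}$ does \emph{not} satisfy the defining property of Definition~\ref{def:termdeps}; that is, there exist $M,\beta$ and states $s_1,s_2$ terminating as $\{s_1'\},\{s_2'\}$, agreeing on all variables in $\depset({\tt x},{\tt p})$ except possibly ${\tt y}$, yet with $s_1'({\tt x}) \neq s_2'({\tt x})$. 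The delicate point is to massage these witnesses into a pair that is genuinely $l$-equivalent: I would modify $s_2$ into $\tilde s_2$ by copying the values of $s_1$ on every variable ${\tt z}$ with $\lvl({\tt z}) \sqsubseteq l$ other than those already constrained, so that $s_1 \approx_l \tilde s_2$ while the disagreement driving $s_1'({\tt x}) \neq s_2'({\tt x})$ is preserved. Since $\lvl({\tt y}) \not\sqsubseteq l$, the variable ${\tt y}$ is not touched by this adjustment, so its (possibly differing) values survive. Running {\tt p} from $s_1$ and $\tilde s_2$ then yields final states differing on ${\tt x}$ with $\lvl({\tt x}) = l \sqsubseteq l$, contradicting Definition~\ref{def:noninf}.

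The main obstacle is the second direction, specifically turning the failure-of-minimality witnesses into properly $l$-equivalent initial states while keeping the output disagreement on ${\tt x}$ alive. This requires care about which variables may be freely overwritten: one must change only variables that are (a) below $l$ and (b) not among the dependencies of ${\tt x}$ that are supposed to agree, and argue — again via Definition~\ref{def:termdeps}, this time applied to the full set $\depset({\tt x},{\tt p})$ — that such overwriting cannot by itself flip the value of ${\tt x}$ back into agreement. A clean way to organise this is to first prove a monotonicity lemma (if two initial states agree on a superset of $\depset({\tt x},{\tt p})$ they produce equal ${\tt x}$, trivially) and a relevance lemma (for each ${\tt y} \in \depset({\tt x},{\tt p})$ there is a witnessing pair of runs), and then combine them. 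Everything else — unwinding $\approx_l$ into pointwise equalities, and the identity $\val[M,s',\beta]({\tt x}) = s'({\tt x})$ — is routine.
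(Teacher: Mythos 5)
Your proposal is correct and takes essentially the same route as the paper's proof: the direction from dependency-based non-interference to non-interference is argued identically, and the converse is the paper's contrapositive argument, except that where the paper simply asserts that ${\tt y} \in \depset({\tt x},{\tt p})$ yields witness states that are already $\lvl({\tt x})$-equivalent, you make this explicit via minimality of $\depset({\tt x},{\tt p})$ and by patching the low variables outside $\depset({\tt x},{\tt p})$, using Definition~\ref{def:termdeps} to keep the output disagreement on ${\tt x}$. The only point left tacit (in your patching step as well as in the paper's terser version) is that the modified witness state is still assumed to give a terminating run, which Definitions~\ref{def:noninf} and~\ref{def:termdeps} both presuppose.
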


The security condition can be expressed (per program variable) as a program logic formula (see~\cite{BHW09}). For this purpose, a ghost variable $\dep{\tt x}$ is introduced for  each program variable {\tt x}.
Assume that {\tt x}, {\tt y} and {\tt z} are the only variables in program {\tt p}, $\lvl({\tt y}) \sqsubseteq \lvl({\tt x})$ and $\lvl({\tt z}) \not\sqsubseteq \lvl({\tt x})$.
When we extend the function $\mathit{val}$ appropriately to update the state such that $\dep{\tt x}$ contains (an over-approximation of) the set of dependencies of {\tt x},
we can express the security condition as:
\[
  \dep{\tt x} =  \{{\tt x}\} \mathop{\&}
  \dep{\tt y} =  \{{\tt y}\} \mathop{\&}
  \dep{\tt z} =  \{{\tt z}\} \rightarrow
  [{\tt p}]
  \dep{\tt x} \subseteq \{{\tt x}, {\tt y}\}
\]

For a more formal description of how $\mathit{val}$ is extended to over-approximate the dependencies, we refer to~\cite{BHW09}.

\subsection{Non-interference with declassifications}

In this paper we allow for the specification and verification of more fine-grained security conditions than non-interference on program variable dependencies.
As illustrated by the decryption example in Fig.~\ref{fig:decrypt}, we desire more control on \emph{when} information may interfere, and on \emph{what} information from the initial state is released,
as per the dimensions of declassification~\cite{SabelfeldS2005}.

As a means of specifying such conditions, we introduce a \emph{declassification environment} $\declenv$ to the security policy $\policy = (L, \sqsubseteq, \declenv)$.
A declassification environment maps security levels in $L$ to a set of \emph{declassification pairs} consisting of a formula (\emph{when}) and a term ({\emph{what}).
That is, $(\varphi, t) \in \declenv(l)$ specifies that when $\varphi$ holds on the initial state, the information about that initial state as specified by term $t$ may be declassified to level $l$.

For example, $({\tt password} \keyeq {\tt guess},{\tt (salary1 + salary2) / 2}) \in \declenv({\tt low})$ specifies that the average of two salaries may be released to security level {\tt low} if the right password was entered (regardless of the security level of each salary).\draftnote{Give example for SSL example.}

We can define the equivalence of states with respect to a set of declassification pairs $\declpairs$, as agreeing on the information that is declassified whenever both states allow for the declassification.
A similar notion is used in~\cite{banerjee2008expressive,vanhoef2014stateful} for `flowspecs' resp.\  `stateful declassifications'.

\begin{definition}[$\declpairs$-equivalent states]
  Given a set $\declpairs$ of declassification pairs, a first-order structure $M$ and a variable assignment $\beta$.
  Two states $s_1, s_2$ are $\declpairs$-equivalent, written $s_1 \approx_{\declpairs} s_2$,
  iff for all $(\varphi, t) \in \declpairs$, if $\mathit{val}_{M,s_1,\beta}(\varphi) = \mathit{val}_{M,s_2,\beta}(\varphi) = \semtrue$
  then $\mathit{val}_{M,s_1,\beta}(t) = \mathit{val}_{M,s_2,\beta}(t)$.
\end{definition}

Terms/formulas in declassification pairs may not contain programs, updates or logical variables.
Non-interference for a level $l$ in the presence of a declassification environment is then defined straightforwardly by only considering states that are equivalent w.r.t.\  the declassification pairs that declassify to $l$ (or lower).

\begin{definition}[Non-interference with declassifications]
  Let $\policy = (L,\sqsubseteq,\declenv)$.
  Given a program {\tt p}, any two states $s_1, s_2$, any first-order structures $M$ and any variable assignments $\beta$,
  such that $\textit{val}_{M,s_1,\beta}({\tt p}) = \{s_1'\}$ 
       and  $\textit{val}_{M,s_2,\beta}({\tt p}) = \{s_2'\}$.
  Then {\tt p} is said to be \emph{non-interfering} with respect to $\policy$
   iff for all levels $l \in L$ with $\declpairs = \bigcup_{l' \sqsubseteq l} \declenv(l')$,
     when $s_1 \approx_l s_2$ and $s_1 \approx_{\declpairs} s_2$
     then $s_1' \approx_l s_2'$.
\label{def:noninf:decl}
\end{definition}

To specify non-interference with declassifications as a dependency problem, we revisit our definition of term dependencies to be parametrised on a set of declassification pairs $\declpairs$.
Definition~\ref{def:termdeps:decl} varies from Definition~\ref{def:termdeps} only in the additional requirement that $s_1$ and $s_2$ are $\declpairs$-equivalent.

\begin{definition}[Term dependencies with declassifications]
  Given a program {\tt p}, a term $t$ and a set of declassification pairs $\declpairs$.
  The \emph{term dependencies with declassifications} of $t$ under {\tt p} is  defined as the smallest set $\depset(t, {\tt p}, \declpairs) \subseteq \PVars$ of program variables such that 
  the following holds for all states $s_1$, $s_2$, for any first-order structures $M$ and any variable assignments $\beta$ with $\textit{val}_{M,s_1,\beta}({\tt p}) = \{s_1'\}$ and  $\textit{val}_{M,s_2,\beta}({\tt p}) = \{s_2'\}$.
  If $s_1 \approx_{\declpairs} s_2$
  and for all ${\tt y} \in \depset(t, {\tt p}, \declpairs)$  $s_1({\tt y}) = s_2({\tt y})$, 
  then  $\mathit{val}_{M, s'_1 , \beta}(t) = \mathit{val}_{M, s'_2 , \beta}(t)$.
\label{def:termdeps:decl}
\end{definition}

We can now define non-interference in the presence of a declassification environment, by determining the dependencies of a variable {\tt x} in the presence of all declassifications to a level below $\lvl({\tt x})$.

\begin{definition}[Dependency-Based Non-Interference with declassifications]
  Given a policy $\policy = (L, \sqsubseteq, \declenv)$.
  A program {\tt p} is non-interfering with respect to $\policy$ iff for each variable ${\tt x} \in \PVars$,
  with $\declpairs = \bigcup_{l' \sqsubseteq \lvl({\tt x})} \declenv(l')$,
  for all variables ${\tt y} \in \depset({\tt x}, {\tt p}, \declpairs)$,  it holds that $\lvl({\tt y}) \sqsubseteq \lvl({\tt x})$.
\label{def:depnoninf:decl}
\end{definition}

\begin{lemma}
  Definitions \ref{def:noninf:decl} and \ref{def:depnoninf:decl} are equivalent. Proof in App.~\ref{app:proofs:lem:declniasdep}.
  \label{lem:declniasdep}
\end{lemma}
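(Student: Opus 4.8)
The plan is to prove the two implications separately, closely following the proof of Lemma~\ref{lem:niasdep}; the only genuinely new work is the bookkeeping between the two collections of declassification pairs that occur — those that declassify to a level $l$, and the (possibly smaller) collection that declassifies only to $\lvl({\tt x})$ for a variable ${\tt x}$ with $\lvl({\tt x})\sqsubseteq l$. For a program variable ${\tt x}$ write $\mathit{Lo}_{\tt x}=\{{\tt z}\in\PVars\mid\lvl({\tt z})\sqsubseteq\lvl({\tt x})\}$ and $\declpairs_{\tt x}=\bigcup_{l'\sqsubseteq\lvl({\tt x})}\declenv(l')$. Two states agree on all of $\mathit{Lo}_{\tt x}$ exactly when they are $\lvl({\tt x})$-equivalent, and for a program variable ${\tt x}$ one has $\mathit{val}_{M,s',\beta}({\tt x})=s'({\tt x})$, so Definition~\ref{def:termdeps:decl} instantiated at the term ${\tt x}$ speaks about the final values $s_1'({\tt x})$ and $s_2'({\tt x})$.

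For Definition~\ref{def:noninf:decl} $\Rightarrow$ Definition~\ref{def:depnoninf:decl}, fix a variable ${\tt x}$ and instantiate Definition~\ref{def:noninf:decl} at the level $l=\lvl({\tt x})$; since $\sqsubseteq$ is reflexive, $\bigcup_{l'\sqsubseteq l}\declenv(l')=\declpairs_{\tt x}$, and ${\tt x}$ is itself among the variables of level $\sqsubseteq l$. The instantiated statement then says: for any two terminating runs from states $s_1,s_2$ with $s_1\approx_{\lvl({\tt x})}s_2$ and $s_1\approx_{\declpairs_{\tt x}}s_2$ the final states are $\lvl({\tt x})$-equivalent, hence in particular agree on ${\tt x}$. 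This is precisely the closure property defining $\depset({\tt x},{\tt p},\declpairs_{\tt x})$ in Definition~\ref{def:termdeps:decl}, now verified for the set $\mathit{Lo}_{\tt x}$. Since $\depset({\tt x},{\tt p},\declpairs_{\tt x})$ is the \emph{smallest} set with that property, $\depset({\tt x},{\tt p},\declpairs_{\tt x})\subseteq\mathit{Lo}_{\tt x}$, i.e.\ every ${\tt y}\in\depset({\tt x},{\tt p},\declpairs_{\tt x})$ has $\lvl({\tt y})\sqsubseteq\lvl({\tt x})$, which is exactly Definition~\ref{def:depnoninf:decl} for ${\tt x}$. Using minimality of $\depset$ in this direction is what lets us avoid synthesising an explicit pair of distinguishing states and patching it into an $l$-equivalent pair.

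For the converse, assume Definition~\ref{def:depnoninf:decl}, fix a level $l$ with $\declpairs=\bigcup_{l'\sqsubseteq l}\declenv(l')$, and take states $s_1\approx_l s_2$ with $s_1\approx_{\declpairs}s_2$ whose runs of ${\tt p}$ terminate in $s_1',s_2'$. It suffices to show $s_1'({\tt x})=s_2'({\tt x})$ for every ${\tt x}$ with $\lvl({\tt x})\sqsubseteq l$. Transitivity of $\sqsubseteq$ gives $\mathit{Lo}_{\tt x}\subseteq\{{\tt z}\mid\lvl({\tt z})\sqsubseteq l\}$ and $\declpairs_{\tt x}\subseteq\declpairs$; hence $s_1\approx_{\lvl({\tt x})}s_2$, and since $\approx_{(\cdot)}$ is monotone in the set of declassification pairs (a subset of pairs yields a coarser equivalence, so $\approx_{\declpairs}\,\subseteq\,\approx_{\declpairs_{\tt x}}$) also $s_1\approx_{\declpairs_{\tt x}}s_2$. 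By Definition~\ref{def:depnoninf:decl}, every ${\tt y}\in\depset({\tt x},{\tt p},\declpairs_{\tt x})$ satisfies $\lvl({\tt y})\sqsubseteq\lvl({\tt x})\sqsubseteq l$, so $s_1$ and $s_2$ agree on all of $\depset({\tt x},{\tt p},\declpairs_{\tt x})$. Definition~\ref{def:termdeps:decl}, applied to the term ${\tt x}$, the program ${\tt p}$ and the pairs $\declpairs_{\tt x}$, then yields $s_1'({\tt x})=s_2'({\tt x})$; since ${\tt x}$ was arbitrary, $s_1'\approx_l s_2'$.

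No single step is deep; the delicate point is that both directions compose through the defining property of $\depset$, and this relies on $\depset(t,{\tt p},\declpairs)$ being the \emph{least} set satisfying the closure property of Definition~\ref{def:termdeps:decl} — that property is upward closed in the set, and $\PVars$ (or the finitely many variables occurring in ${\tt p}$) trivially satisfies it, so a least element exists, as in \cite{BHW09}. The first implication hinges on exactly this: "$\mathit{Lo}_{\tt x}$ satisfies the property, hence $\depset\subseteq\mathit{Lo}_{\tt x}$". Beyond that, only reflexivity and transitivity of $\sqsubseteq$, monotonicity of $\approx_{\declpairs}$ in $\declpairs$, and the shared termination assumption of all three definitions involved are needed, so no divergence case has to be considered.
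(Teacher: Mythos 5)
Your proof is correct. The second direction coincides with the paper's argument, except that you make explicit the bookkeeping the paper leaves implicit: you distinguish $\declpairs_{\tt x}=\bigcup_{l'\sqsubseteq\lvl({\tt x})}\declenv(l')$ from $\declpairs=\bigcup_{l'\sqsubseteq l}\declenv(l')$, and use transitivity of $\sqsubseteq$ together with monotonicity of $\approx_{\declpairs}$ in the pair set to obtain $s_1\approx_{\declpairs_{\tt x}}s_2$ before invoking Definition~\ref{def:termdeps:decl}. The first direction is where you genuinely diverge: the paper argues by contraposition, asserting that ${\tt y}\in\depset({\tt x},{\tt p},\declpairs)$ with $\lvl({\tt y})\not\sqsubseteq\lvl({\tt x})$ yields witness states that are $\lvl({\tt x})$-equivalent and $\declpairs$-equivalent yet produce different final values of ${\tt x}$; that witness extraction is itself only justified by the minimality of $\depset$, which the paper does not spell out. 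You instead argue directly: instantiating Definition~\ref{def:noninf:decl} at $l=\lvl({\tt x})$ shows that the set $\mathit{Lo}_{\tt x}$ of variables at or below $\lvl({\tt x})$ satisfies the closure property of Definition~\ref{def:termdeps:decl}, so minimality gives $\depset({\tt x},{\tt p},\declpairs_{\tt x})\subseteq\mathit{Lo}_{\tt x}$. This is a cleaner route that surfaces the one real assumption both arguments rest on, namely that a least set satisfying the closure property exists, as presupposed by Definition~\ref{def:termdeps:decl}; note only that your side remark that upward closure plus $\PVars$ satisfying the property yields a least element is not by itself a proof (the property need not be closed under intersection), but this is a definitional issue shared with the paper rather than a gap specific to your argument.
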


\subsection{Non-interference with declassifications as a formula}

Approximating the dependencies of a program variable {\tt x}  as a set of other variables results in a significant loss of precision, as it eliminates all knowledge  about \emph{how} the dependencies contributed to the value of \java{x}. This makes it impossible to determine whether some dependencies can be removed due to declassification.
Intuitively, we need to know what information about the initial state is recorded in the value of {\tt x} in the final state.
The approach taken here, is to record how the final value of {\tt x} can be computed from the initial state.
Then, we can inspect this computation and determine whether it contains any declassified information.

Naively, we could introduce the dependency variable \dep{\tt x} as done  in~\cite{BHW09}, except now tracking a \emph{term} that represents this specialised computation instead of the explicit set of dependencies.
That is, given a program {\tt p} with $\val({\tt p}) = \{s'\}$, we would have that $\val(s'(\dep{{\tt x}})) = \mathit{val}_{M,s',\beta}({\tt x})$.
Multiple terms can represent the same information, hence, simply encoding this specialised computation as a single term $t$ is problematic.

For example, the term \java{x - x} represents the same computation as \java{0}, but the variables occurring in each term (i.e. the dependencies) are different.
At the same time, \java{x+y} specifies the same information as the term \java{y+x}, but the syntactic term is different.
Therefore representing the computation as one specific term might restrict us to specify the declassification terms in the same syntactic form.


These issues can be addressed by representing the computation not with a specific term, but with an \emph{equivalence class} of terms.
Two terms are said to be equivalent if they both evaluate to the same value in any context.\draftnote{actually, we should exclude $M$ from this for all quantification right? We assume a fixed interpretation for constants and function/predicate symbols.}

\begin{definition}[Equivalent terms]
  Two terms $t$, $t'$ are equivalent, written $t \eqterm t'$, iff.\ 
  for all first-order structures $M$, states $s$ and variable assignments $\beta$,
  $\val(t) = \val(t')$.
\end{definition}

As is common, we denote with $\ECsem{t}$ the equivalence class of term $t$, i.e. the set of all terms equivalent to $t$.\draftnote{This overloads $[{\tt p}]$}
Analogously, we also introduce equivalence classes of formulas, $\ECsem{\varphi}$, but focus in the remainder of this section on terms only.

We extend our syntax with a sort for equivalence classes $\ECS \in \Sorts$.
As terms of this sort we introduce a \emph{Herbrand symbol} for every concrete function, predicate and program variable symbol in the signature.
We denote these symbols using $\EC{\cdot}$ and preserve the arity but substitute all sorts for the universal sort $\ECS$.
For example,
 $\EC{{\tt x}} : \ECS$,
 $\EC{17} : \ECS$,
 $\EC{+} : \ECS \rightarrow \ECS \rightarrow \ECS$ and
 $\EC{\geq} : \ECS \rightarrow \ECS \rightarrow \ECS$.
We naturally extend this syntax for function application, e.g. we write $\EC{+} (\EC{{\tt x}}, \EC{17})$ as $\EC{{\tt x} + 17}$.
The Herbrand terms are then evaluated to the equivalence class that they represents, i.e. $\val(\EC{t}) = \ECsem{t}$.\draftnote{We only define the semantics for Herbrand that are also real terms, e.g. {\tt true + false} is a possible Herbrand term but can't have a semantics like that.. Should we be more precise here?}
This implies that, for example, $\val(\EC{{\tt x} - {\tt x}}) = \val(\EC{0})$.

We update the evaluation function for programs to ensure that $\dep{\tt x}$ maintains the equivalence class of computations representing the information that variable {\tt x} has on the initial state.
We give special treatment to the control statements since we need to capture the implicit dependencies.
For a conditional statement branching on $\varphi$, we evaluate each branch which gives us, per variable {\tt x}, the equivalence classes $e_1$ and $e_2$ for each computation.
We select from each class a representative term $t_1$ and $t_2$ (for our purpose it does not matter which exact term is selected) and update $\dep{\tt x}$ to $\EC{\condterm{\varphi}{t_1}{t_2}}$.
For while loops we introduce dedicated symbols to summarise the computation, over-approximating dependencies as all variables contributing to the loop and its condition.

The updated semantics (given in App.~\ref{app:semantics:extended}) allows us to approximate the dependencies of a program variable {\tt x} by picking \emph{any} term $t \in \dep{\tt x}$ and taking $\vars(t)$, the set of program variables occurring in~$t$.

\begin{lemma}[Correctness of \dep{\tt x}]
\label{lem:xdepcorrect}
For all programs {\tt p} and program variables {\tt x}, 
for any state $s$, first-order structure $M$, and variable assignment $\beta$
with $\val({\tt p}) = \{s'\}$ and $s(\dep{\tt y}) = \ECsem{\tt y}$ for all ${\tt y} \in \PVars$,
it holds that $\depset({\tt x}, {\tt p}) \subseteq \vars(t)$ for each $t \in s'(\dep{\tt x})$.
(Proof: see Appendix~\ref{app:proofs:lem:xdepcorrect})
\end{lemma}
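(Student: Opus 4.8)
The plan is to prove Lemma~\ref{lem:xdepcorrect} by structural induction on the program {\tt p}, using the extended semantics from Appendix~\ref{app:semantics:extended} that drives the ghost variables $\dep{\tt y}$. The core idea is to unfold the definition of $\depset({\tt x}, {\tt p})$ (Definition~\ref{def:termdeps}): fix a term $t \in s'(\dep{\tt x})$; I must show that for any two states $s_1, s_2$ that agree on all variables in $\vars(t)$, the program {\tt p} (run from $s_1$ and $s_2$) produces final states $s_1', s_2'$ with $\mathit{val}_{M,s_1',\beta}({\tt x}) = \mathit{val}_{M,s_2',\beta}({\tt x})$. Since $\depset$ is defined as the \emph{smallest} such set, establishing this property for $\vars(t)$ gives the claimed inclusion. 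The strengthened induction hypothesis I would carry is: whenever $\val({\tt p}) = \{s'\}$ and the $\dep{\tt y}$ are initialized to $\ECsem{\tt y}$, then for every program variable {\tt x} and every $t \in s'(\dep{\tt x})$, the value $s'({\tt x})$ is a function only of the initial values of the variables in $\vars(t)$ — i.e. agreement of two initial states on $\vars(t)$ forces agreement of the two outputs on {\tt x}. I would also need a companion statement phrased for arbitrary terms/expressions appearing in assignments, relating the computed equivalence class to the variables actually read.

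\textbf{Key steps.}
First I would handle the base case, the assignment ${\tt x} = e$: the extended semantics sets $\dep{\tt x}$ to (the equivalence class of) $e$ with each program variable {\tt y} syntactically replaced by a representative of $s(\dep{\tt y}) = \ECsem{\tt y}$, so every $t \in s'(\dep{\tt x})$ satisfies $t \eqterm e$; hence $\vars(t)$ must contain enough variables to determine $\val(e)$, which is exactly what $\depset({\tt x}, {\tt p})$ requires (and for all other variables ${\tt z} \neq {\tt x}$, $\dep{\tt z}$ is unchanged, $s'(\dep{\tt z}) = \ECsem{\tt z}$, and $\depset({\tt z}, {\tt p}) = \{{\tt z}\}$). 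Second, sequential composition {\tt p; q}: run {\tt p} to get $s'$, then {\tt q} to get $s''$; a $t \in s''(\dep{\tt x})$ is built by composing the computation {\tt q} assigns with the computations recorded in $s'(\dep{\tt y})$ for the {\tt y} that {\tt q} reads, so $\vars(t)$ collects, transitively, the variables on which $s''({\tt x})$ depends — apply the IH to {\tt q} and then to {\tt p}. Third, the conditional: here $\dep{\tt x}$ becomes $\EC{\condterm{\varphi}{t_1}{t_2}}$ where $t_i$ are representatives from the two branches' classes; $\vars$ of this term is $\vars(\varphi) \cup \vars(t_1) \cup \vars(t_2)$, which covers both the implicit dependency on the guard and the explicit dependencies of whichever branch executes — apply the IH to both branches. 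Fourth, the while loop: by construction the dedicated summary symbols over-approximate dependencies by taking \emph{all} variables occurring in the loop body and its condition, so $\vars(t)$ is (a superset of) that set, and one argues — essentially by an inner induction on the number of iterations — that the final value of {\tt x} can only depend on initial values of those variables.

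\textbf{Main obstacle.}
I expect the while-loop case to be the crux: the extended semantics uses opaque summary symbols rather than an explicit fixpoint, so I have to verify that the \emph{specification} of these symbols (they tag every variable appearing in the loop as a dependency of every variable the loop may modify) genuinely soundly over-approximates the semantic $\depset$ — this needs an auxiliary lemma that, after any number of iterations of {\tt while} $(\varphi)$ $\{{\tt body}\}$, every live variable's value is determined by the initial values of $\vars(\varphi) \cup \vars({\tt body})$, proved by induction on the iteration count (the base case being zero iterations, where nothing changes, and the step reusing the assignment/sequencing/conditional reasoning on {\tt body}). A secondary subtlety is that the conditional case needs the guard $\varphi$ to be evaluated in a state agreeing with both $s_1$ and $s_2$: since $\vars(\varphi) \subseteq \vars(\EC{\condterm{\varphi}{t_1}{t_2}})$, the two runs take the \emph{same} branch, which is what makes the branch-wise IH applicable — I would make sure to spell this out, as it is the one place where the "smallest set" definition interacts nontrivially with control flow.
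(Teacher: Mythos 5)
Your overall skeleton (structural induction on {\tt p} over the extended semantics, then discharging Definition~\ref{def:termdeps} via the minimality of $\depset$) is reasonable, but the induction hypothesis you carry is too weak, and it fails precisely at the conditional case. The lemma quantifies over \emph{every} term $t$ in the equivalence class $s'(\dep{\tt x})$, not just the representative $\condterm{\varphi}{t_1}{t_2}$ that the semantics constructs. Your argument for conditionals (``since $\vars(\varphi)$ is contained in the variables of the stored term, the two runs take the same branch'') applies only to that canonical representative; for an equivalent class member with fewer variables the two compared runs may take \emph{different} branches, and your dependency-style IH, which only compares two runs of the \emph{same} branch, then gives nothing. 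Concretely, for \java{if (y) \{x=8\} else \{x=8\}} the stored class is $\ECsem{8}$, the member $t=8$ has $\vars(t)=\emptyset$, and two initial states differing on {\tt y} execute different branches -- exactly the precision case the equivalence classes were introduced for. The missing idea, and the route the paper takes, is to strengthen the invariant to an \emph{evaluation} statement (Lemma~\ref{app:lemma:depcorrect}): every $t \in s'(\dep{\tt x})$, evaluated in the \emph{initial} state $s$, equals $s'({\tt x})$. That property is invariant under $\eqterm$, so it covers all class members at once; the main lemma then follows in a few lines using the trivial fact $\depset(t,{\tt ;}) \subseteq \vars(t)$ (Lemma~\ref{app:lemma:varsubset}) together with an auxiliary lemma (Lemma~\ref{app:lemma:depsame}) that the dependency ghost variables are insensitive to the ordinary variables' initial values -- a fact you also need implicitly, since your $t$ is taken from one run while Definition~\ref{def:termdeps} compares two other runs, and you never justify that all three runs store the same class.

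Two further mismatches with where the actual work lies. The sequential-composition case carries the real technical load: one must re-run the second program from a state whose dependency variables are reset to $\ECsem{\tt y}$ and then characterize the resulting class as the substitution closure $t_{\tt x}[{\tt y} \mapsto t_{\tt y}]$; your phrase ``composing the computations'' gestures at this, but the reset-state and substitution argument is exactly what needs spelling out, and it is also where the strengthened evaluation invariant (rather than your dependency-style one) chains cleanly. Conversely, the while loop, which you single out as the crux, is immediate in the paper: the summary symbols $\mathcal{W}$ are \emph{defined} to denote the loop's result as a function of the initial values of all variables occurring in the guard and body, so the evaluation invariant and the variable inclusion hold by construction, with no inner induction on iteration counts required.
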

In the absence of declassifications, we can thus phrase the security condition for program variable {\tt x}
with $\lvl({\tt y}) \not\sqsubseteq \lvl({\tt x})$ in program {\tt p} as follows:\draftnote{Not entirely true, since this would allow one to change the term obtained from $\choice$ but before applying {\it pvars}. However, since this is not the final form of the formula I think we don't have to mention that here.}
\[
  \dep{\tt x} \keyeq \EC{\tt x} \keyand
  \dep{\tt y} \keyeq \EC{\tt y} \keyimplies
  \dlboxf{{\tt p}}{
  \vars(\choice(\dep{\tt x})) \subseteq \{{\tt x}\}}
\]
Here $\choice$ selects an arbitrary term from the  equivalence class, i.e.. $\choice(\ECsem{t}) \in \ECsem{t}$, which, by Lemma~\ref{lem:xdepcorrect}, gives a correct over-approximation of the term dependencies.
Our calculus implements a natural implementation of $\choice$, namely to take exactly the term that currently syntactically represents the equivalence class.

A declassification environment conditionally allows us to ignore some of the program variables in the term dependency.
To reflect this, we replace the function $\vars \cdot \choice$ with the undefined function $\deps$.
Let $(\varphi, t) \in \declenv(l)$ with $L$ denoting the set of program variables of level $l$ or lower.
We then define the dependency sets of the equivalence classes by adding the assumption:
\[
  \forall \ECS\ e . \
    \deps(e) \keyeq
      \mathit{if}(\varphi \keyand e \keyeq \EC{t})\mathit{then}( L )\mathit{else}(\mathit{h}(e) )
\]
which states that when the declassification condition $\phi$ holds in the initial state and the  equivalence class $e$ contains the term $t$ (which specifies the declassified information) then $e$'s dependency set is simply the set $L$. Otherwise, the dependencies are over-approximated by the function $h$, which applies $\deps$ on the sub-computations forming $e$.
We define $h$ inductively over the term structure used to represent equivalence classes:
$$ 
  h(\EC{\tt x}) = \{{\tt x}\} \quad
  h(\EC{f(t_1,\ldots,t_n)}) = \deps(t_1) \cup\dots\cup\deps(t_n) \quad
  h(\EC{\keynot\varphi}) = \deps(\varphi)
$$
and $h$  applies $\deps$ similar homomorphic on all other terms and formulas.

Multiple declassifications can be  defined by nesting single declassifications.
For instance, for the decryption program {\tt p} from Fig.~\ref{fig:decrypt}, we can consider the policy $\declenv({\tt Low})$ with two declassification pairs $\{ (\keytrue, \varphi), (\varphi, \java{cipher*key}) \}$.
Here, $\varphi = \varphi_1 \keyand \varphi_2$ with $\varphi_1 = ({\tt cipher} * {\tt key}) \% 256 \keyeq 0$ and $\varphi_2 = \keynot((({\tt cipher}*{\tt key})/256)\%256 \keyeq -1)$, specify together when the decrypted message can be released. Namely, it is always released whether the validity check of the decrypted message was successful or not, but the message itself is only released if it has a valid structure. The adherence of \java{p} to this policy can be encoded as a formula in our logic as follows:
\begin{align}
  & \dep{{\tt cipher}} \keyeq \EC{{\tt cipher}} \keyand
    \dep{{\tt key}}    \keyeq \EC{{\tt key}}    \keyand
    \forall \ECS\ e . \
    \deps(e) \keyeq \nonumber \\
  & ~\mathit{if}(\keytrue \keyand e \keyeq \EC{\varphi})\mathit{then}( \{{\tt res}\} )\mathit{else}( \label{eq:decr:formula} \\ 
  & \quad \mathit{if}(\varphi \keyand e \keyeq \EC{{\tt cipher}*{\tt key}})\mathit{then}( \{{\tt res}\} )\mathit{else}(\mathit{h}(e) )) \nonumber
   \rightarrow [{\tt p}]
      \deps(\dep{\tt res}) \subseteq \{{\tt res}\} \nonumber
\end{align}
The following lemma ensures soundness of the above non-interference encoding:

\begin{lemma}[Soundness]
\label{lem:soundnessNI}
Given a policy $\policy = (L, \sqsubseteq, \declenv)$.
Let ${\it INIT}$ be the conjunction of formulas $\dep{\tt y} =  \EC{\tt y}$ for every ${\tt y} \in \PVars$,
${\it DEPS}$ the formula defining $\deps$ as described above, and
${\it LVL}({\tt x})$ the set of program variables with a level $l' \sqsubseteq \lvl({\tt x})$.
Then for every program {\tt p}, the validity of
\[
  {\it INIT} \keyand
  {\it DEPS} \keyimplies
  [{\tt p}] \deps(\dep{\tt x})) \subseteq {\it LVL}({\tt x})
\]
for each ${\tt x} \in \PVars$ implies non-interference of {\tt p} as per Definition~\ref{def:depnoninf:decl}.
(Proof: By Lemma~\ref{lem:xdepcorrect} and correctness of the {\it DEPS} encoding, see Appendix~\ref{app:proofs:soundness})
\end{lemma}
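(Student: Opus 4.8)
The plan is to chain together Lemma~\ref{lem:xdepcorrect} (correctness of the plain dependency tracking $\dep{\tt x}$) with a correctness property of the $\deps$ encoding, reducing the validity of the displayed formula to Definition~\ref{def:depnoninf:decl}. First I would fix a program variable ${\tt x}$, assume the displayed formula is valid, and unfold what validity means: for every first-order structure $M$, every state $s$ and every assignment $\beta$, if $s$ satisfies ${\it INIT}$ and ${\it DEPS}$, then the final state $s'$ of ${\tt p}$ (if it exists) satisfies $\deps(s'(\dep{\tt x})) \subseteq {\it LVL}({\tt x})$. Since ${\it INIT}$ merely pins $s(\dep{\tt y}) = \ECsem{\tt y}$ for all ${\tt y}$, and ${\it DEPS}$ can always be satisfied by interpreting the fresh function symbol $\deps$ appropriately (it is a definitional axiom, existentially fixing one interpretation), the hypothesis of the lemma gives us, for \emph{that} interpretation of $\deps$: $\val[s'](\deps(\dep{\tt x})) \subseteq {\it LVL}({\tt x})$, i.e. the dependency \emph{set} computed by $\deps$ on the equivalence class $s'(\dep{\tt x})$ is contained in ${\it LVL}({\tt x})$.

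Next I would establish the key bridging claim, call it the \emph{correctness of ${\it DEPS}$}: for the interpretation of $\deps$ forced by ${\it DEPS}$ and for $\declpairs = \bigcup_{l' \sqsubseteq \lvl({\tt x})} \declenv(l')$, the value $\val[s'](\deps(e))$ is a set $D \subseteq \PVars$ such that, whenever $s_1 \approx_{\declpairs} s_2$ and $s_1, s_2$ agree on all variables in $D$, the equivalence class $e$ (viewed as the computation it represents from the initial state) evaluates equally in $s_1'$ and $s_2'$. This is proved by induction on the term structure representing $e$, following the two cases of the $\deps$ definition: in the \emph{then}-branch we are in a declassification case $(\varphi, t) \in \declenv(l')$ with $e = \ECsem{t}$ and $\varphi$ holding initially, so $\declpairs$-equivalence directly forces the declassified term $t$ — hence $e$ — to agree; in the \emph{else}-branch we invoke $h$, and the homomorphic clauses of $h$ reduce the claim to the inductive hypotheses on the sub-computations $\deps(t_i)$, using that function/predicate symbols are rigid so agreement of the arguments forces agreement of the result. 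The base case $h(\EC{\tt x}) = \{{\tt x}\}$ is immediate.

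Combining these, for the final state $s'$ of ${\tt p}$: $\deps(\dep{\tt x})$ computes a dependency set $D \subseteq {\it LVL}({\tt x})$, and by the ${\it DEPS}$-correctness claim, $D$ is such that $\declpairs$-equivalence plus agreement on $D$ forces the computation $s'(\dep{\tt x})$ — which by Lemma~\ref{lem:xdepcorrect}'s underlying invariant represents exactly the information ${\tt x}$ carries about the initial state, i.e. $\val[s'](\choice(\dep{\tt x}))$ equals the final value of ${\tt x}$ — to agree across the two runs. Therefore every ${\tt y} \in \depset({\tt x}, {\tt p}, \declpairs)$ must lie in $D$, hence in ${\it LVL}({\tt x})$, which means $\lvl({\tt y}) \sqsubseteq \lvl({\tt x})$. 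As this holds for every ${\tt x}$, Definition~\ref{def:depnoninf:decl} is satisfied, and by Lemma~\ref{lem:declniasdep} ${\tt p}$ is non-interfering with respect to $\policy$.

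The main obstacle I expect is the ${\it DEPS}$-correctness induction, specifically handling the interaction between the declassification \emph{when}-condition $\varphi$ (evaluated on the \emph{initial} state) and the equivalence-class semantics: one must be careful that $e \keyeq \EC{t}$ really means the represented computations coincide in \emph{all} contexts, so that when the \emph{then}-branch fires it fires consistently for both runs, and that the nesting of multiple declassifications does not let a variable "escape" $D$ along one branch while being needed along another. A secondary subtlety is that $\deps$ is only a \emph{partial/undefined} function being pinned by an axiom — I would note that any model satisfying ${\it DEPS}$ assigns it the intended least-fixed-point-like value on the (well-founded) term representations actually arising, so the quantifier $\forall \ECS\, e$ is harmless for the finitely many equivalence classes reachable as subterms of $s'(\dep{\tt x})$.
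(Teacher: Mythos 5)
Your proposal takes essentially the same route as the paper's proof in Appendix~\ref{app:proofs:soundness}: it, too, combines Lemma~\ref{lem:xdepcorrect} with a correctness argument for the ${\it DEPS}$ encoding (the paper handles the latter by a case split on whether $\declenv$ is empty---where $\deps$ collapses to $\vars$ of the chosen representative---and otherwise notes that variables are dropped only when the corresponding \emph{when}-condition holds, so they are likewise excluded by Definition~\ref{def:termdeps:decl}), and your inductive bridging claim merely spells out that terse second case. The only slip is notational: the representative of $s'(\dep{\tt x})$ must be evaluated in the \emph{initial} states (Lemma~\ref{app:lemma:depcorrect}: $\mathit{val}_{M,s,\beta}(t)=s'({\tt x})$ for $t\in s'(\dep{\tt x})$), not in $s_1',s_2'$ as written, which does not affect the structure of your argument.
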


\section{A calculus for dependencies}
\label{sec:calculus}

We present here the sequent calculus for our program logic as defined in Sect.~\ref{sec:security}. The calculus follows the symbolic execution paradigm, but with some changes compared to the calculus defined in Sect.~\ref{sec:calculus:intro}.  We start with the calculus rule for dealing with  assignments:
$$\footnotesize
  \seqRule{{assignment}^{\tt dep}}
	{	   \sequent{}
	   {\applyUp{u}{} \upl {\tt x} \upd t \,\|\,
	        \dep{{\tt x}} \upd  \wrap(t) \upr \dlboxf{\dots} \varphi}
	}
	{\sequent
	  {}
	   { \applyUp{u}{} \dlboxf{x = \ensuremath{t}; \dots}\varphi}
	}
$$
As we can see the rule is almost unchanged to the version given in Sect.~\ref{sec:calculus:intro} except that in addition to updating the value of \java{x} also the value of the ghost variable \dep{\java{x}} is updated to capture the dependencies of \java{x} faithfully. The value of  \dep{\java{x}}  is updated to the Herbrand term computed from $t$ by function $\wrap$. Note, $\wrap$ is not a function of the logic, and hence, not part of the syntax -- instead it is a meta function which given a term returns a term. For instance, let $t$ above be instantiated with the term $\java{x}+\java{y}$ ($\java{x}$ and $\java{y}$ being program variables) then $\wrap(\java{x}+\java{y})=\EC{+}(\dep{\java{x}}, \dep{\java{y}})$. 
The full definition of \emph{wrap} can be found in Figure~\ref{fig:wrap} in Appendix~\ref{app:semantics:extended}.

The conditional rule is changed more fundamentally for two reasons (i) we need to track implicit dependencies introduced by the condition and (ii) to capture the dependencies as precise as possible:
$$\footnotesize
  \seqRule{{ifElse}^{\tt dep}}
	{	   
	  \Gamma \turnstyle
	   \applyUp{u}{}  \upl \bar{\tt x}^{pre} \upd \bar{\tt x} \upr  \dlboxf{p1}{} 
	   \upl \bar{\tt x}^t \upd \bar{\tt x} \upr \upl r \upr \dlboxf{p2}{} 
	   \upl \bar{\tt x}^e \upd \bar{\tt x} \upr \upl r \upr 
	        \upl v \upr          \dlboxf{\dots} \varphi. \Delta
	}
	{\sequent
	  {}
	   {\applyUp{u}{} \dlboxf{if(\ensuremath{\varphi})\{p1\}else\{p2\}; \dots}\varphi}
	}
$$
where
\begin{itemize}
  \item $\bar{\tt x} = ({\tt x}_1, {\tt x}_1^{\tt dep}, \dots, {\tt x}_n, {\tt x}_n^{\tt dep})$ 
        are all variables changed in {\tt p1} or {\tt p2} with their corresponding dependency variables,
  
  \item $\bar{\tt x}^{pre} = ({\tt x}^{pre}_1, {\tt x}^{predep}_1,\dots, {\tt x}^{pre}_n, {\tt x}^{predep}_n)$,
        $\bar{\tt x}^t = ({\tt x}^t_1, \dots, {\tt x}^t_n)$ and 
        $\bar{\tt x}^e = ({\tt x}^e_1, \dots, {\tt x}^e_n)$ are lists of fresh variables,
  
 \item $r$ is the resetting update $\bar{{\tt x}} \upd \bar{\tt x}^{pre}$, and
  
 \item $v$ is the collection of parallel updates for all $ {\tt x}_i, {\tt x}_i^{\tt dep} \in \bar{\tt x}$:\hfill\\[0.25em]
$
\quad         {\tt x}_i       \upd{}  \condterm{\varphi}{{\tt x}^t_i}{{\tt x}^e_i}~~\|~~
         \dep{{\tt x}_i} \upd{}  \EC{\condterm{\wrap(\varphi)}{{\tt x}^{t, {\tt dep}}_i}{{\tt x}^{e, {\tt dep}}_i}}
$
\end{itemize}
In contrast to the standard rule, the above rule does not split the proof into two branches. Instead it computes first the effect of the then-branch  \java{p1} and stores the values of the modified variables by assigning them to the fresh variables $\bar{\tt x}^t$. Afterwards the values of the modified variables are set back to their values before execution of \java{p1}. Next the else-branch \java{p2} is executed, the values of the modified variables is saved in $\bar{\tt x}^e$ and again reset. Finally, the values stored in $\bar{\tt x}^t$ and $\bar{\tt x}^e$ are combined (in update $v$) using conditional terms such that they describe the final values of the variables $\bar{\tt x}$ and $\overline{\dep{x}}$. 
\begin{example} Assume the  sequent
	$\footnotesize
		\seq{}{\dlboxf{if (b) \{x=1\} else \{x=2\}; \dots}\varphi}		
	$.
Applying rule \ruleName{{ifElse}^{\tt dep}} we get the sequent (\java{x}, \java{b} are program variables)
\[\footnotesize
\begin{array}{l}
		\begin{array}{l}\Longrightarrow
				\applyUp{\java{x}^{\tt pre}\upd \java{x}\|\java{x}^{\tt predep}\upd \dep{\java{x}}}{\dlboxf{x=1}{\applyUp{\java{x}^{\tt t}\upd \java{x}\|\java{x}^{\tt tdep}\upd \dep{\java{x}}}{\applyUp{\java{x}\upd \java{x}^{\tt pre}\|\dep{\java{x}} \upd \java{x}^{\tt predep}}{}}}}
				\\\phantom{\Longrightarrow}
			       \quad\dlboxf{x=2}{
			       	          \applyUp{\java{x}^{\tt e}\upd\java{x}\|\java{x}^{\tt edep}\upd \dep{\java{x}}}{\applyUp{\java{x}\upd \java{x}^{\tt pre}\|\dep{\java{x}} \upd \java{x}^{\tt predep}}{}}}\\
			       \phantom{\Longrightarrow}\qquad{\applyUp{\java{x}\upd\condterm{\java{b}}{\java{x}^{\tt t}}{\java{x}^{\tt e}}\| \dep{\java{x}}\upd\EC{\condterm{}{}{}}(\dep{\java{b}}, \java{x}^{\tt tdep},\java{x}^{\tt edep})}{\dlbox{\ldots}{\varphi}}}
		\end{array}
\end{array}
\]
After applying $\dep{\ruleName{assignment}}$ twice and simplification the above sequent becomes
\[\footnotesize
\begin{array}{l}
\Longrightarrow
\quad\begin{array}{l}
\applyUp{\java{x}\upd\condterm{\java{b}}{\java{1}}{\java{2}}\| \dep{\java{x}}\upd\EC{\condterm{}{}{}}(\dep{\java{b}}, \EC{1},\EC{2})}{\dlbox{\ldots}{\varphi}}
\end{array}
\end{array}
\]
\end{example}	

The following rule demonstrates how to reason about dependencies of equivalence classes whose representative is a conditional term:
\[\footnotesize
  \seqRule{{ifElseSplit}^{\tt dep}}
  {
    \sequent{}{\deps(\EC{\varphi}) \cup \deps(\condterm{\varphi}{\EC{t_1}}{\EC{t_2}}) \subseteq L}
  }
  {
    \sequent{}{\deps(\EC{\condterm{\varphi}{t_1}{t_2}}) \subseteq L}
  }
\]
The above rule is sound as the conditional term in the premise causes the proof to split into two branches in which the dependencies of the then-branch resp.\ else-branch are included.

In addition to the above rules we need some rewrite rules for equivalence classes, for instance, $\EC{x + 0} \leadsto \EC{x}$ or $\EC{\condterm{\varphi}{t}{t}} \leadsto \EC{t}$ which allows us to prove that two equivalence classes are equal. 

As it turns out the presented approach does not require a special loop invariant rule, but can use the classic loop invariant rule shown below
\[\footnotesize
  \seqRule{loopInvariant}
  {
    \sequent{}{\upl u \upr \mathit{Inv}} \qquad
    \mathit{Inv}, \varphi \turnstyle \dlboxf{p}{\mathit{Inv}}\qquad
    \mathit{Inv}, \keynot \varphi \turnstyle \dlboxf{...}{\phi}
  }
  {
    \sequent{}{\upl u \upr \dlboxf{\java{while}(\ensuremath{\varphi})\{p\};...}{\phi}} \\
  }
\]
The loop invariant rule splits the proof into three branches: (i) in the first branch one has to show that the invariant $\mathit{Inv}$ holds just before entering the loop; (ii) the second branch requires to prove that the loop body preserves the invariant, i.e., assuming the loop condition $\varphi$ and invariant hold before execution of the loop body then both hold again after the execution; and finally in (iii) we have to prove the original post condition $\phi$ when exiting the loop using only the assumptions that the invariant holds and the loop guard was evaluated to false. The only thing which changes in our logic is that the loop invariant formula $\mathit{Inv}$ must also state invariants about the dependency variables $\dep{x}$. 

\section{Example}
\label{sec:example}


\subsection{Decryption}

Let ${\it INIT} \keyand
  {\it DEPS} \keyimplies
  \dlboxf{{\tt p}}{\deps(\dep{\tt res}) \subseteq \{{\tt res}\}}$
be the summarised formula of equation~\eqref{eq:decr:formula} from Sect.~\ref{sec:security}
specifying the non-interference for the decryption program from Fig.~\ref{fig:decrypt}. 
Abbreviating the antecedent with $\Gamma$ and the formula after the program with $\psi$,
we get as initial sequent:
\begin{center}
\(  \Gamma \seq{}{}
  \dlboxf{{\small \code{msg = cipher * key; paddingOk = msg \% 256 == 0; ...}}}{\psi}
\)
\end{center}
Applying \dep{\ruleName{assignment}} twice and some update rewrite rules, we get:
\[
  \Gamma \seq{}{}
  \applyUp{u}{\dlboxf{{\small \code{if(paddingOk)\{..\}else\{...\}}}}{\psi}}
\]
Where, $u$ contains the updates $\elUp {\dep{{\tt msg}}} {\EC{{\tt cipher} * {\tt key}}}$
and $\elUp {\dep{{\tt paddingOk}}} {\EC{({\tt cipher} * {\tt key}) \% 256}}$.
After two applications of \dep{\ruleName{ifElse}} and three \dep{\ruleName{assignment}};
we get after update simplification and applying the update on $\psi$:
\[
  \Gamma \seq{}{}
  \deps(\EC{
    {\it if}(\varphi_1)
    {\it then}(
      {\it if}(\varphi_2)
      {\it then}({\tt cipher} * {\tt key})
      {\it else}(-1)
    )
    {\it else}(-1)
  }) \subseteq \{{\tt res}\}
\]
where $\varphi_1 = ({\tt cipher} * {\tt key}) \% 256$ and $\varphi_2 = (({\tt cipher} * {\tt key})/256) \% 256$.
The rewrite rules for equivalence classes allow us to replace the syntactic term with\draftnote{Check if right rules are included/used}:
\[
  \Gamma \seq{}{}
  \deps(\EC{
    {\it if}(\varphi_1 \keyand \varphi_2)
    {\it then}({\tt cipher} * {\tt key})
    {\it else}(
      {\it if}(\varphi_1)
      {\it then}(-1)
      {\it else}(-1)
    )
  }) \subseteq \{{\tt res}\}
\]
Which in turn can be rewritten to (also replacing $\varphi_1 \keyand \varphi_2$ with the equal $\varphi$):
\[
  \Gamma \seq{}{}
  \deps(\EC{
    {\it if}(\varphi)
    {\it then}({\tt cipher} * {\tt key})
    {\it else}(-1)
  }) \subseteq \{{\tt res}\}
\]
We can then apply the rule $\ruleName{ifElseSplit}^{\tt dep}$:
\[
  \Gamma \seq{}{} {\deps(\EC{\varphi}) \cup \deps(\condterm{\varphi}{\EC{{\tt cipher} * {\tt key}}}{\EC{-1}}) \subseteq \{{\tt res}\}}
\]
First, we observe that $\Gamma$ allows us to derive that $\deps(\EC{\varphi}) \keyeq \{{\tt res}\}$.
Then, we can split on the formula $\varphi$:
\begin{align*}
   \Gamma, \varphi &\seq{}{}
   \{{\tt res}\}
  \cup \deps(\EC{{\tt cipher} * {\tt key}})
    \subseteq \{{\tt res}\} \\
  \Gamma, \keynot \varphi &\seq{}{}
   \{{\tt res}\}
  \cup \deps(\EC{-1})
    \subseteq \{{\tt res}\}
\end{align*}
For the first sequent, we can now derive that $\deps(\EC{{\tt cipher} * {\tt key}}) \keyeq \{{\tt res}\}$, while
in case of the second sequent, we  get  $\deps(\EC{-1}) = h(\EC{-1}) = \emptyset$ 
$$
   \Gamma, \varphi \seq{}{}
  \{{\tt res}\}  \subseteq \{{\tt res}\} \quad \text{and} \quad
  \Gamma, \keynot \varphi \seq{}{}
  \{{\tt res}\}  \subseteq \{{\tt res}\}
$$
Both are trivially valid and thus proving that the program from Fig.~\ref{fig:decrypt} satisfies non-interference under the specified declassification policy.

\subsection{Loop invariants}

\begin{wrapfigure}{r}{0.4\textwidth}
	\label{fig:code:mult}
  \begin{center}
	 \vspace{-3.5em}
	  \nolstbar
    \begin{lstjava}
  m0 = m;
  r = 0;
  while (m > 0) {
	  r = r + n;
	  m = m - 1;
  }
\end{lstjava}
	\vspace{-2em}
  \end{center}
  \caption{Multiplication.}
	\vspace{-1.5em}
\end{wrapfigure}

As is usual in program verification, loops can be dealt with using invariants.
Typically, this requires the interaction with a human user to specify an appropriate invariant.
Consider the program in Fig.~\ref{fig:code:mult} which uses a while loop to multiply the (positive) numbers {\tt m} and {\tt n}, storing the result in {\tt r}.
The variable {\tt m0} is an auxiliary variable to record the pre-state of variable {\tt m}, allowing a human user to specify the invariant $\dep{{\tt r}} \keyeq \EC{({\tt m0} - {\tt m}) * {\tt n}}$.

A strong advantage of the approach presented here is that invariants can be generated automatically using the algorithm presented in~\cite{BHW09}.
For the above program the generated loop invariant would be $\deps(\dep{{\tt r}}) \subseteq \{{\tt m}, {\tt r}, {\tt n} \}$.
Although these invariants drop the computational history of {\tt r}, they appear sufficient for the verification of our security conditions.


\subsection{Heap}


As a final remark, we discuss how an explicitly represented heap can be integrated in our calculus.
By treating the heap like any other program variable with dedicated functions {\tt store} and {\tt select}, our extended calculus can be  applied as-is to  programs with heaps.
Let {\tt p} be the program from Fig.~\ref{fig:alias}. The sequent (without declassifications)
\[
  \Gamma \seq{}{} \dlboxf{p}{\deps(\EC{\mathtt{select}(\dep{\java{heap}},\dep{{\tt objA}}, {\tt f})})} \subseteq \{\}
\]
would be valid, were it not for the possibility of {\tt objA} being an alias of {\tt objC}.
Applying the rules for assignments, conditionals and simplifications, we obtain:
\begin{align*}
  \Gamma \seq{}{}&
    \cdots
    \upl
      \dep{{\tt objC}} \upd \EC{\condterm{{\tt h} \keyeq 0}{\dep{{\tt objA}}}{\dep{{\tt objB}}}}
    \upr\\
  & \deps(\EC{\mathtt{select}(\mathtt{store}(\dep{{\tt heap}}, \dep{{\tt objC}}, {\tt f}, 17),\dep{{\tt objA}}, {\tt f})}) \subseteq \{\}
\end{align*}
Thanks to representing the heap explicitly, the fact that the value of {\tt objC} depends on {\tt h} is also present when reading an element from the heap.
The equivalence class describing the dependencies of \java{objA.f} thus correctly contains the dependency on {\tt h} and the sequent cannot be proven valid.

\section{Related work}
\label{sec:related}

The seminal work~\cite{SabelfeldM04} proposes a type-based approach to delimited information release that requires to mark program points at which declassification occurs explicitly.
Further  type-based systems  have been developed by \cite{BarthePR2007} and \cite{BanerjeeN2005}, but these omit certain features like support for declassification.
	
The most popular type-based system that allows for declassification is the Jif~\cite{Myers1999} approach and tool.
Its rich set of features allows it to be used for realistic programs~\cite{clarkson2007civitas,hicks2006trusted}.
The programming language Paragon\cite{broberg2013paragon} presents an alternative type system that allows for the enforcement of dynamic policies.
That is, rather than explicitly declassifying information, the ordering between security labels can be changed during program execution.

There is also a number of logic-based approaches to information-flow security that make use self-composition~\cite{DarvasHS2003,DarvasHS2005,BartheCK2011}.
In contrast our approach is based on the explicit tracking of dependencies and not on the comparison of two program runs.
This avoids the need for optimizations to prevent the actual repeated execution of two runs like in product programs~\cite{BartheCK2011}.
In addition to self-composition, \cite{DarvasHS2003,DarvasHS2005} also present an alternative  semantic encoding of non-interference, but
which lacks abstraction requiring a higher degree of interaction during the proving process.\draftnote{Is this a comment on all the self-composition work? In that case should be clear.}

\cite{banerjee2008expressive} introduce flowspecs: small declassification specifications on a small group of statements that should be checked using a program verification, interleaved with the type-checking of non-interference.
Since specifications can only be defined on the program's pre-state, this modular approach requires that no assignment is made to variables which are to be declassified at a later point.

In~\cite{HahnlePRW2008} the authors also integrate a type-based system for information flow into a program logic. Their  logic guarantees to find a proof automatically for all programs that are deemed secure by the type-based system. However, their logic has to be designed freshly for any other kind of type-based systems.

The approach presented in~\cite{BHW09} is based on abstract interpretation and explicit dependency tracking. We extended the precision of their approach by tracking not just the sets of variables, but also the computational structure of the dependencies. This allows us to reason about delimited information release properties and not only non-interference properties.


\section{Conclusions and future work}
\label{sec:conclusions}

In this paper we presented a program logic and calculus suitable to express and prove that a program has secure information-flow with respect to non-interference and conditional delimited information release. The program logic was based on explicit dependency tracking instead of self-composition. As future work we plan to integrate our approach in the abstraction framework introduced in~\cite{BHW09} and to perform experiments to evaluate the degree of automation.

\bibliographystyle{alpha} 
\bibliography{literature}{}

\newpage
\appendix


\section{Semantics}
\subsection{Basic Semantics}
\label{app:semantics:basic}

The semantics of terms is defined in Figure~\ref{fig:evaluation-of-terms},
for formulas in Figure~\ref{fig:evaluation-of-formulas},
for programs in Figure~\ref{fig:evaluation-of-programs} and
for updates in Figure~\ref{fig:evaluation-of-updates}.

In Figure~\ref{fig:evaluation-of-formulas}, we denote with $\beta^v_y$ the variable assignment $\beta$ that maps logical variable $y$ to the value $v$ instead of the original value.
In Figure~\ref{fig:evaluation-of-programs}, we use a fixed-point semantics for while loops.
In Figure~\ref{fig:evaluation-of-updates}, to account for the `right-most-wins' semantics, we define the semantics for elementary update and parallel updates simultaneously.

\begin{figure}
\[\begin{array}{rl}
  \val({\tt x}) &= s({\tt x}) \\
  \val(y) &= \beta(y) \\
  \val(f(t_1,\dots,t_n) &= \Interpretation(f)(\val(t_1), \dots, \val(t_n)) \\
  \val({\it if}(\varphi){\it then}(t_1){\it else}(t_2)) &= \begin{cases}
      \val(t_1) & \text{if } \val(\varphi) = \semtrue \\
      \val(t_2) & \text{otherwise}
    \end{cases} \\
  \val(\applyUp{u}{t}) &= {\it val}_{M, s', \beta}(t) \text{ where } s' = \val(u)
\end{array}\]
\caption{Definition of the evaluation function $\val$ for terms.}
\label{fig:evaluation-of-terms}
\end{figure}

\begin{figure}
\[\begin{array}{rl}
  \val(\keytrue) &= \semtrue \\
  \val(\keyfalse) &= \semfalse \\
  \val(p(t_1,\ldots,t_n)) &= \semtrue \text{ iff } (\val(t_1),\ldots,\val(t_n)) \in \Interpretation(p)\\
  \val(\keynot\varphi) &= \semtrue \text{ iff } \val(\varphi) = \semfalse \\
  \val(\varphi_1 \keyand \varphi_2) &= \semtrue \text{ iff } \semfalse \not\in \{\val(\varphi_1), \val(\varphi_2)\}\\
  \val(\varphi_1 \keyor \varphi_2) &= \semtrue \text{ iff } \semtrue \in \{\val(\varphi_1), \val(\varphi_2)\} \\
  \val(\varphi_1 \keyimplies \varphi_2) &= \val(\keynot \varphi_1 \keyor \varphi_2)\\
  \val(\exists y.\varphi) &= \semtrue \text{ iff } \semtrue \in \{{\it val}_{M,s,\beta^v_y}(\varphi) \mid v \in \concreteDomain \}\\
  \val(\forall y.\varphi) &= \semtrue \text{ iff } \semfalse \not\in \{{\it val}_{M,s,\beta^v_y}(\varphi) \mid v \in \concreteDomain \}\\
  \val(\condterm{\varphi_1}{\varphi_2}{\varphi_3}) &= \begin{cases}
      \val(\varphi_2) & \text{if } \val(\varphi_1) = \semtrue \\
      \val(\varphi_3) & \text{otherwise}
    \end{cases} \\
  \val(\applyUp{u}{\varphi}) &= {\it val}_{M,s',\beta}(\varphi) \text{ where } s' = \val(u) \\
  \val(\dlbox{\tt p}{\varphi}) &= \begin{cases}
      {\it val}_{M,s',\beta}(\varphi) & \text{if } \val({\tt p}) = \{s'\} \\
      \semtrue                        & \text{otherwise}
    \end{cases}
\end{array}\]
\caption{Definition of the evaluation function $\val$ for formulas.}
\label{fig:evaluation-of-formulas}
\end{figure}

\begin{figure}
\[\begin{array}{rl}
  \val(\code{x=}t) &= \{s'\} \text{ where } s'({\tt y}) = \begin{cases}
      \val(t)    & \text{if } {\tt y} = {\tt x} \\
      s({\tt y}) & \text{otherwise}
    \end{cases} \\
 \val(\code{p}_1\code{;p}_2) &= \begin{cases}
     {\it val}_{M, s', \beta}({\tt p}_2) & \text{if } \val({\tt p}_1) = \{ s' \} \\
     \emptyset                           & \text{otherwise}
   \end{cases}\\
 \val(\code{if (}\varphi\code{) \{ p}_1\code{ \} else \{ p}_2\code{ \}}) &= \begin{cases}
     \val({\tt p}_1) & \text{if } \val(\varphi) = \semtrue \\
     \val({\tt p}_2) & \text{otherwise}
   \end{cases} \\
 \val(\code{while (}\varphi\code{) \{p\}}) & = \left\{
     \begin{array}{ll}
         \{ s' \}  & \text{if } \exists n\geq 0:~s=s_0, s'=s_n \text { and } \forall i < n:\\
                   & {\it val}_{M, s_i, \beta}({\tt p}) = \{s_{i+1}\} \text { and } {\it val}_{M, s_i, \beta}(\varphi) = \semtrue\\
                   & \text{and } {\it val}_{M, s_n, \beta}(\varphi) = \semfalse \\
         \emptyset & \text{otherwise }
     \end{array}\right. 
\end{array}\]
\caption{Definition of the evaluation function $\val$ for programs.}
\label{fig:evaluation-of-programs}
\end{figure}

\begin{figure}
\[\begin{array}{rl}
  \val(\elUp{{\tt x}_1}{t_1} \parUp{}{} \ldots \parUp{}{} \elUp{{\tt x}_n}{t_n}) =&
    \{ {\tt x} \mapsto s({\tt x}) \mid {\tt x} \not\in \{ {\tt x}_1, \ldots, {\tt x}_n \} \}\ \cup \\
    & \{ {\tt x} \mapsto \val(t_k) \mid {\tt x} = {\tt x}_k \text{ and } {\tt x} \not\in \{{\tt x}_{k+1}, \ldots, {\tt x}_n\} \}
\end{array}\]
\caption{Definition of the evaluation function $\val$ for updates.}
\label{fig:evaluation-of-updates}
\end{figure}

\subsection{Computation-Tracking Semantics}
\label{app:semantics:extended}

%

Figure~\ref{fig:evaluation:extended:progs} defines the updated evaluation of programs; where we focus on the changes to the dependency variables.

For assignments we wrap the term using the syntactic function $\wrap$ shown in Figure~\ref{fig:wrap}.
This function homormophically replaces all symbols by their Herbrand counterparts and all program variables by their corresponding dependency variable.

For conditional statements we evaluate both branches, select a term from the equivalence class resulting from each branch and combine them in a single conditional term.
Here, $\mathcal{C}$ is the choice operator that selects any term from an equivalence class, i.e. $\mathcal{C}(e) \in e$ for all $e : \ECS$.

For while loops we introduce an infinite set of function symbols $\mathcal{W}$, such that $\mathcal{W}({\tt y}, \bar{\tt z})$ returns the value of variable {\tt y} at the end of the while loop, given the initial values $\bar{\tt z}$ of all variables occuring in the condition $\varphi$ or body {\tt p} of the while loop.
The $\mathcal{W}$ functions are introduced only to have a well-defined semantics that tracks the dependencies. The function symbols are not used in the calculus.

\begin{figure}
\[\begin{array}{rl}
  \val(\EC{t}) &= \ECsem{t} \\
  \val(\code{x=} t) &= \{s'\} \text{ where } s'({\tt y}) = \begin{cases}
      \val(t)     & \text{if } {\tt y} = {\tt x} \\
      \val(\wrap(t))      & \text{if } {\tt y} = \dep{\tt x} \\
      s({\tt y})  & \text{otherwise}
    \end{cases} \\
 \val(\code{if (}\varphi\code{) \{ p}_1\code{ \} else \{ p}_2\code{ \}}) &= \begin{cases}
     S'_1 & \text{if } \val(\varphi) = \semtrue \\
     S'_2 & \text{otherwise}
   \end{cases} \\
  & \quad
    \text{where } S_i = \val({\tt p}_i), S'_i = \emptyset \text{ iff } S_i = \emptyset,
  \\
  & \quad
    \text{otherwise } S_i = \{ s_i \} \text{ and } S'_i = \{ s'_i \} \text{ with }
  \\
  & \quad
    s'_i({\tt y}) = \begin{cases}
        \ECsem{{\it if}(\varphi'){\it then}(t_1){\it else}(t_2)} & \text{if } {\tt y} = \dep{{\tt x}}, \\
        & \varphi' = \wrap(\varphi), \\
        & t_i = \choice(s_i(\dep{{\tt x}}))\\
        s_i({\tt y}) & \text{otherwise}
      \end{cases}
  \\
 \val(\code{while (}\varphi\code{) \{p\}}) & = \left\{
     \begin{array}{ll}
         \{ s'' \}  & \text{if } \exists n\geq 0:~s=s_0, s'=s_n \text { and } \forall i < n:\\
                   & {\it val}_{M, s_i, \beta}({\tt p}) = \{s_{i+1}\} \text { and } {\it val}_{M, s_i, \beta}(\varphi) = \semtrue\\
                   & \text{and } {\it val}_{M, s_n, \beta}(\varphi) = \semfalse \text{ with } \\
                   & s''({\tt x}) = \left\{
                     \begin{array}{ll}
                       \EC{\mathcal{W}({\tt y}, \bar{\tt z})} & \text{if } {\tt x} = \dep{\tt y} \\
                       s'({\tt x}) & \text{otherwise}
                     \end{array}\right. \\
         \emptyset & \text{otherwise }
     \end{array}\right. 
\end{array}\]
\caption{Updated definition of the evaluation function $\val$ for dependency tracking.}
\label{fig:evaluation:extended:progs}
\end{figure}

 \begin{figure}
 \begin{align*}
     \wrap({\java{x}}) ={}& \dep{\java{x}} 
  \\ \wrap(f(t_1,\dots,t_n) ={}& \EC{f} ( \wrap(t_1), \dots \wrap(t_n) )
  \\ \wrap(\condterm{\varphi}{t_1}{t_2}) ={}& \EC{\condterm{}{}{}} (\wrap(\varphi), \wrap(t_1), \wrap(t_2))
  \\ \wrap(\keytrue) ={}& \EC{\keytrue}
  \\ \wrap(\keyfalse) ={}& \EC{\keyfalse}
  \\ \wrap(p(t_1,\dots,t_n) ={}& \EC{p} ( \wrap(t_1), \dots \wrap(t_n) )
  \\ \wrap(\keynot \varphi) ={}& \EC{\keynot} ( \wrap(\varphi) )
  \\ \wrap(\varphi_1 \circ \varphi_2) ={}& \EC{\circ} (\wrap(\varphi_1), \wrap(\varphi_2))
  \end{align*}
  \caption{Definition of the syntactic operator $\wrap$.}
  \label{fig:wrap}
\end{figure}

\newpage
\section{Proofs}
\label{app:proofs}

\subsection{Equivalence between non-interference and dependency-based non-interference}
\label{app:proofs:lem:niasdep}

Non-interference implies dependency-based non-interference:

By contraposition.
Assume that there exists a program variable {\tt x} which has a variable ${\tt y} \in \depset({\tt x}, {\tt p})$ with $\lvl({\tt y}) \not\sqsubseteq \lvl({\tt x})$.
Therefore, program {\tt p} does not satisfy dependency-based non-interference (Definition~\ref{def:depnoninf}).
Since ${\tt y} \in \depset({\tt x}, {\tt p})$, by Definition~\ref{def:termdeps} there must be two states $s_1$, $s_2$ such that $s_1 \approx_{\lvl({\tt x})} s_2$, $s_1({\tt y}) \not= s_2({\tt y})$, $\mathit{val}_{M,s_i,\beta}({\tt p}) = \{s_i'\}$, and $s_1'({\tt x}) \not= s_2'({\tt x})$.
Consequently, program {\tt p} also does not satisfy non-interference (Definition~\ref{def:noninf}).

\noindent
Dependency-based non-interference implies non-interference:

Given a program {\tt p} and security level $l$.
If a program {\tt p} is dependency-based non-interferent then, by Definition~\ref{def:depnoninf} for each variable {\tt x} with level $l' \sqsubseteq l$, for all ${\tt y} \in \depset({\tt x}, {\tt p})$ we have $\lvl({\tt y}) \sqsubseteq \lvl({\tt x})$.
Given two states $s_1$, $s_2$ such that $s_1 \approx_l s_2$, it thus follows that $s_1({\tt y}) = s_2({\tt y})$ for all ${\tt y} \in \depset({\tt x}, {\tt p})$.
If $\mathit{val}_{M,s_i,\beta}({\tt p}) = \{s_i'\}$ then by Definition~\ref{def:termdeps} $s_1'({\tt x}) = s_2'({\tt x})$ for each variable {\tt x} with level $l' \sqsubseteq l$, and hence $s_1' \approx_l s_2'$.
Therefore, program {\tt p} is non-interfering (Definition~\ref{def:noninf}).

\subsection{Equivalence with declassifications}
\label{app:proofs:lem:declniasdep}

Non-interference implies dependency-based non-interference:

By contraposition.
Assume that there exists a program variable {\tt x} which has a variable ${\tt y} \in \depset({\tt x}, {\tt p}, \declpairs)$ with $\lvl({\tt y}) \not\sqsubseteq \lvl({\tt x})$.
Therefore, program {\tt p} does not satisfy dependency-based non-interference (Definition~\ref{def:depnoninf:decl}).
Since ${\tt y} \in \depset({\tt x}, {\tt p}, \declpairs)$, by Definition~\ref{def:termdeps:decl} there must be two states $s_1$, $s_2$ such that $s_1 \approx_{\lvl({\tt x})} s_2$, $s_1 \approx_{\declpairs} s_2$, $s_1({\tt y}) \not= s_2({\tt y})$, $\mathit{val}_{M,s_i,\beta}({\tt p}) = \{s_i'\}$, and $s_1'({\tt x}) \not= s_2'({\tt x})$.
Consequently, program {\tt p} also does not satisfy non-interference (Definition~\ref{def:noninf:decl}).

\noindent
Dependency-based non-interference implies non-interference:

Given a program {\tt p} and security level $l$.
If a program {\tt p} is dependency-based non-interferent then, by Definition~\ref{def:depnoninf:decl} for each variable {\tt x} with level $l' \sqsubseteq l$, for all ${\tt y} \in \depset({\tt x}, {\tt p}, \declpairs)$ we have $\lvl({\tt y}) \sqsubseteq \lvl({\tt x})$.
Given two states $s_1$, $s_2$ such that $s_1 \approx_l s_2$ and $s_1 \approx_\declpairs s_2$, it thus follows that $s_1({\tt y}) = s_2({\tt y})$ for all ${\tt y} \in \depset({\tt x}, {\tt p}, \declpairs)$.
If $\mathit{val}_{M,s_i,\beta}({\tt p}) = \{s_i'\}$ then by Definition~\ref{def:termdeps:decl} $s_1'({\tt x}) = s_2'({\tt x})$ for each variable {\tt x} with level $l' \sqsubseteq l$, and hence $s_1' \approx_l s_2'$.
Therefore, program {\tt p} is non-interfering (Definition~\ref{def:noninf:decl}).

\subsection{Correctness of $\dep{\tt x}$}
\label{app:proofs:lem:xdepcorrect}

Auxiliary lemmas:

\begin{lemma}
  For all programs {\tt p} and program variables {\tt x}, 
  for all state $s_1, s_2$, first-order structure $M$, and variable assignment $\beta$
  with $\mathit{val}_{M,s_1,\beta}({\tt p}) = \{s_1'\}$ and $\mathit{val}_{M,s_2,\beta}({\tt p}) = \{s_2'\}$,
  we have $s_1(\dep{{\tt y}}) = s_2(\dep{{\tt y}})$ for all {\tt y} $\in \PVars$ implies $s_1'(\dep{{\tt x}}) = s_2'(\dep{{\tt x}})$.
  \label{app:lemma:depsame}
\end{lemma}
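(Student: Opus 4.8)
The plan is to prove Lemma~\ref{app:lemma:depsame} by induction on the structure of the program {\tt p}, using the updated evaluation function $\val$ for dependency tracking given in Figure~\ref{fig:evaluation:extended:progs}. The statement says that if two runs start from states that agree on \emph{all} dependency variables $\dep{\tt y}$, then after executing {\tt p} they still agree on every $\dep{\tt x}$. The crucial observation making this work is that the new value of any $\dep{\tt x}$ is always computed purely from (i) the \emph{old} values of other dependency variables, via the meta-function $\wrap$, the choice operator $\mathcal{C}$, and the $\mathcal{W}$-symbols, and (ii) the syntactic form of the program — never from the non-dependency program variables themselves. So the dependency variables form a closed subsystem whose evolution is deterministic in the dependency part of the state alone.

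First I would set up the induction. For the base case, an assignment \code{x = $t$}: by Figure~\ref{fig:evaluation:extended:progs}, $s'(\dep{\tt x}) = \val(\wrap(t))$ and $s'(\dep{\tt y}) = s(\dep{\tt y})$ for $\dep{\tt y} \neq \dep{\tt x}$. Since $\wrap(t)$ is a term built entirely from Herbrand symbols applied to dependency variables (Figure~\ref{fig:wrap}), and by hypothesis $s_1$ and $s_2$ agree on all $\dep{\tt y}$, a trivial sub-induction on $\wrap(t)$ gives $\mathit{val}_{M,s_1,\beta}(\wrap(t)) = \mathit{val}_{M,s_2,\beta}(\wrap(t))$, hence $s_1'(\dep{\tt x}) = s_2'(\dep{\tt x})$; the untouched dependency variables are handled by the hypothesis directly. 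For sequential composition \code{p}$_1$\code{;p}$_2$, apply the induction hypothesis to {\tt p}$_1$ (both runs terminate since $\val({\tt p}) = \{s_i'\}$ forces termination of {\tt p}$_1$) to get agreement on all $\dep{\tt y}$ in the intermediate states, then apply it again to {\tt p}$_2$.

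The conditional case \code{if ($\varphi$) \{p$_1$\} else \{p$_2$\}} is the one needing care, because the two runs may take \emph{different} branches (the guard $\varphi$ may depend on ordinary program variables on which $s_1,s_2$ disagree). However, the new value $s_i'(\dep{\tt x})$ is $\ECsem{\condterm{\wrap(\varphi)}{t_1}{t_2}}$ where $t_j = \choice(s_j(\dep{\tt x}))$ comes from evaluating \emph{both} branches $t_1 = \choice$ of the then-result and $t_2 = \choice$ of the else-result — here I would need to read the semantics carefully, since the branch that is not executed must still have its dependency effect recorded (this is exactly why the calculus rule $\ruleName{ifElse}^{\tt dep}$ runs both branches). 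The point is that the resulting dependency value is a function of the dependency-results of both branches and of $\wrap(\varphi)$, none of which sees the actual guard value, so it is the same in both runs once we know (by two applications of the induction hypothesis, one per branch, run from the same dependency-agreeing states) that both branches produce agreeing dependency results. The while case is easiest: by Figure~\ref{fig:evaluation:extended:progs}, $s''(\dep{\tt y}) = \EC{\mathcal{W}({\tt y}, \bar{\tt z})}$ where $\bar{\tt z}$ are \emph{dependency variables} of the variables in the loop, so agreement on those in the pre-state gives agreement on $s''(\dep{\tt y})$ immediately, independent of how many iterations each run performs.

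The main obstacle I anticipate is pinning down the conditional case against the exact wording of the extended semantics: I must verify that $\val$ for \code{if} genuinely evaluates both branches for the purpose of updating $\dep{\tt x}$ (not only the taken one), and that the $\choice$/$\mathcal{C}$ operator, being a fixed choice function on equivalence classes, returns the same representative in both runs when applied to equal equivalence classes. Given that, the argument is a routine structural induction; I would also note in passing that this lemma is the symmetric companion to Lemma~\ref{lem:xdepcorrect} and is presumably used to discharge the "for each $t \in s'(\dep{\tt x})$" quantifier there by reducing it to a statement about a single canonical representative.
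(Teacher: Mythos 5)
Your proposal is correct and follows essentially the same route as the paper: the paper's own proof is just a terser version of your observation that, by inspection of the extended semantics, each $\dep{\tt x}$ is recomputed only from other dependency variables and the program syntax (via $\wrap$, $\choice$ and the $\mathcal{W}$-symbols), with induction covering the conditional (whose semantics indeed records both branches' dependency results regardless of the guard). Your fuller structural induction, including the verification you flag for the \code{if} case, is exactly the elaboration of that argument; the only slight misreading is that in the while case the arguments $\bar{\tt z}$ of $\mathcal{W}$ are the loop's program variables rather than their dependency variables, which does not affect the conclusion.
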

\begin{proof}
  By inspection of $\val$ we observe that the state $s$ is irrelevant for determining the value of $\dep{\tt x}$, except for the evaluation of conditional statement and program variables.
  In these cases, the value of other dependency variables is used to construct a new value.
  Since we have that these values are the same in any two states (for the conditional statement by induction) also the newly constructed value is equal in any two states.
\end{proof}

\begin{lemma}
  For all programs {\tt p} and program variables {\tt x}, 
  for any state $s$, first-order structure $M$, and variable assignment $\beta$
  with $\val({\tt p}) = \{s'\}$ and $s(\dep{\tt y}) = \ECsem{\tt y}$ for all ${\tt y} \in \PVars$,
  it holds that $\val(t) = s'({\tt x})$ for each $t \in s'(\dep{\tt x})$.
  \label{app:lemma:depcorrect}
\end{lemma}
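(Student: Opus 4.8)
The plan is to prove Lemma~\ref{app:lemma:depcorrect} by structural induction on the program {\tt p}, tracking simultaneously that for every program variable {\tt x} and every term $t \in s'(\dep{\tt x})$ we have $\val(t) = s'({\tt x})$; it is convenient to strengthen the induction hypothesis so that it speaks about \emph{all} program variables at once (since the control statements mix dependency variables of different program variables), and to carry along the side fact that the newly assigned dependency values do not mention any $\dep{\tt z}$ symbols — only the original program variables — so that $\vars$ of a representative is a set of genuine program variables. The base cases are the assignment $\code{x}=t$: here $s'(\dep{\tt x}) = \ECsem{\wrap(t)}$, and one checks by an inner induction on the term/formula structure of $t$ that $\val(\wrap(t)) = \val(t)$, using the defining equations of $\wrap$ in Fig.~\ref{fig:wrap} together with the hypothesis $s(\dep{\tt y}) = \ECsem{\tt y}$, i.e. $\val(\dep{\tt y}) = \val({\tt y})$, at the leaves. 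For every other variable ${\tt y}\neq{\tt x}$, $s'(\dep{\tt y}) = s(\dep{\tt y}) = \ECsem{\tt y}$ and the claim is immediate from the hypothesis. The heap assignment $\code{o.a}=t$ is the same case with $\code{heap}$ in place of $\code{x}$.

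For the composition $\code{p}_1\code{;}\code{p}_2$ we apply the induction hypothesis to $\code{p}_1$ to get an intermediate state $s_1$ with $\val[s_1](t)=s_1({\tt x})$ for all $t\in s_1(\dep{\tt x})$; but to reapply the hypothesis to $\code{p}_2$ we need $s_1(\dep{\tt y}) = \ECsem{\tt y}$, which is \emph{not} true in general. So the composition case must instead be handled by a more general invariant: I would prove the auxiliary statement that for \emph{any} starting state $s$, writing $e_{\tt y}$ for $s(\dep{\tt y})$ and $v_{\tt y}$ for $s({\tt y})$, after running {\tt p} the final dependency class $s'(\dep{\tt x})$ evaluates — under any $M,\beta$ and any state that sends each $\dep{\tt y}$ to an element of $e_{\tt y}$ and each ${\tt y}$ to $v_{\tt y}$ — to $s'({\tt x})$. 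Equivalently: the value represented by $\dep{\tt x}$ is a function of the values of the original program variables composed through the represented computation. With this formulation the composition case is a direct chaining of the two instances, since the final dependency classes of $\code{p}_1$ feed into $\code{p}_2$ exactly as "the current dependency values".

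For the conditional \code{if ($\varphi$) \{$\code{p}_1$\} else \{$\code{p}_2$\}} the relevant branch, say $\code{p}_1$ with $\val(\varphi)=\semtrue$, gives $s_1 = $ the state after $\code{p}_1$, and $s'(\dep{\tt x}) = \ECsem{\condterm{\varphi'}{t_1}{t_2}}$ with $\varphi' = \wrap(\varphi)$ and $t_1 = \choice(s_1(\dep{\tt x}))$. Evaluating this representative: since $\val(\varphi)=\semtrue$ and (by the inner $\wrap$-induction, or by Lemma for formulas) $\val(\varphi')=\val(\varphi)=\semtrue$, the conditional term evaluates to $\val(t_1)$, which by the induction hypothesis on $\code{p}_1$ equals $s_1({\tt x}) = s'({\tt x})$. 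The case $\val(\varphi)=\semfalse$ is symmetric. Here one also uses that the "other" branch representative $t_2$ need not evaluate correctly — only the selected branch does, which is exactly why the conditional term is the right construction. For the \code{while} loop, $s'(\dep{\tt y}) = \ECsem{\mathcal{W}({\tt y},\bar{\tt z})}$, and the correctness is by definition of $\mathcal{W}$: $\mathcal{W}({\tt y},\bar{\tt z})$ is introduced precisely so that its interpretation, applied to the initial values $\bar{\tt z}$ of the variables occurring in the loop, is the final value $s'({\tt y})$; so $\val(\EC{\mathcal{W}({\tt y},\bar{\tt z})}) = s'({\tt y})$ holds by construction. (If one wants this to be more than a definitional appeal, one fixes $M$ so that $\Interpretation(\mathcal{W})$ is the iteration function, which is well-defined since the programming language is deterministic and the loop — under the hypotheses of the ambient lemma, where $\val({\tt p})=\{s'\}$ — terminates on the states of interest.)

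The main obstacle is the composition case interacting with the hypothesis $s(\dep{\tt y}) = \ECsem{\tt y}$: the naive induction hypothesis is too weak because that hypothesis is destroyed after the first statement. The fix — strengthening to the "dependency class represents a computation over the original variable values" invariant sketched above — is what makes everything go through, and getting that invariant stated cleanly (so that the conditional and loop cases, where representatives of \emph{several} dependency variables are combined, still fit) is the only real subtlety; the rest is bookkeeping over $\wrap$ and the evaluation equations of Fig.~\ref{fig:evaluation:extended:progs}.
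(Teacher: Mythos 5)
Your overall plan coincides with the paper's proof in every case but one: induction on the structure of {\tt p}, an inner induction over \wrap\ for assignments, the selected-branch argument for conditionals, and the appeal to the $\mathcal{W}$ symbols for loops are all exactly the paper's steps. You also correctly diagnose that sequential composition is the crux, because the hypothesis $s(\dep{\tt y})=\ECsem{\tt y}$ is destroyed after the first statement. The gap is in your fix. The strengthened invariant, as you state it, is not a true statement: you ask that every representative of $s'(\dep{\tt x})$ evaluate to $s'({\tt x})$ ``under any state that sends each $\dep{\tt y}$ to an element of $e_{\tt y}$ and each ${\tt y}$ to $v_{\tt y}$'', with no hypothesis whatsoever relating $e_{\tt y}$ to $v_{\tt y}$. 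Take {\tt p} to be \java{x = y} and a start state with $s({\tt y})=5$ but $s(\dep{\tt y})=\ECsem{17}$: then $s'({\tt x})=5$ while $s'(\dep{\tt x})=\ECsem{17}$, whose representatives evaluate to $17$ in every state (the clause about where $\dep{\tt y}$ is sent is vacuous, since the representatives are ordinary terms not containing dependency variables). Moreover, even where the conclusion happens to hold, composition is not the ``direct chaining'' you claim: the instance for ${\tt p}_2$ started in the intermediate state $s_1$ only gives evaluation correctness in states agreeing with $s_1$ on the program variables, whereas the instance for ${\tt p}_1{\tt ;p}_2$ (and the lemma itself) needs evaluation in the \emph{initial} state $s$.

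The repair is to decouple the evaluation (reference) state $\sigma$ from the current state and to add the missing hypothesis that the current classes already compute the current values from $\sigma$: if for every ${\tt y}$ and every $u\in s(\dep{\tt y})$ we have $\mathit{val}_{M,\sigma,\beta}(u)=s({\tt y})$, then for every $t\in s'(\dep{\tt x})$ we get $\mathit{val}_{M,\sigma,\beta}(t)=s'({\tt x})$; the reference $\sigma$ stays fixed across composition, and the lemma is the instance $\sigma=s$ with $s(\dep{\tt y})=\ECsem{\tt y}$. The paper avoids a strengthened invariant altogether: for ${\tt p}_1{\tt ;p}_2$ it applies the induction hypothesis to ${\tt p}_2$ from an auxiliary state $r'$ that resets every $\dep{\tt y}$ to $\ECsem{\tt y}$, and then transfers the result back by observing that $s''(\dep{\tt x})$ is exactly the set of substituted terms $t_{\tt x}[{\tt y}\mapsto t_{\tt y}]$ with $t_{\tt x}\in r''(\dep{\tt x})$ and $t_{\tt y}\in s'(\dep{\tt y})$, the latter handled by the induction hypothesis for ${\tt p}_1$. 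Either route closes the argument, but as submitted the key invariant of yours is false, so the composition case does not go through.
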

\begin{proof}
  By induction on the structure of {\tt p}.
  \begin{itemize}
  
    \item Case $\java{x = } t$:
    
          Here $s'({\tt x}) = \val(t)$ and $s'(\dep{{\tt x}}) = \wrap(t)$ with each program variables {\tt y} substituted with $\ECsem{\tt y}$.
          By definition of $\eqterm$, all terms in this equivalence class evaluate to the same value in state $s$, so to $\val(t)$.
          
    \item Case ${\tt p}_1{\tt ; p}_2$:
    
          We have $\val({\tt p}_1) = \{ s' \}$ and $\mathit{val}_{M,s',\beta}({\tt p}_2) = \{ s'' \}$.
          Let $r'$ be like $s'$ but with $r'(\dep{{\tt y}}) = \ECsem{\tt y}$ for all ${\tt y} \in \PVars$, such that $\mathit{val}_{M,r',\beta}({\tt p}_2) = \{ r'' \}$.
          
          By induction, $\mathit{val}_{M,r',\beta}(t_{\tt x}) = r''({\tt x})$ for all $t_{\tt x} \in r''(\dep{{\tt x}})$.
          We also have that since $r'$ differs from $s'$ only in the dependency variables, $r''({\tt x}) = s''({\tt x})$.
          Therefore, $\mathit{val}_{M,r',\beta}(t_{\tt x}) = s''({\tt x})$ for all $t_{\tt x} \in r''(\dep{{\tt x}})$.
          
          We also have by induction that $\mathit{val}_{M,s,\beta}(t_{\tt y}) = s'({\tt y})$ for all $t_{\tt y} \in s'(\dep{{\tt y}})$.
          We can therefore substitute each variable {\tt y} in $t_{\tt x}$ with any term $t_{\tt y} \in s'(\dep{{\tt y}})$ and evaluate $t_{\tt x}$ in state $s$ instead.
          That is, $\mathit{val}_{M,s,\beta}(t_{\tt x}[{\tt y} \mapsto t_{\tt y}]) = s''({\tt x})$ for all $t_{\tt x} \in r''(\dep{{\tt x}})$ and all $t_{\tt y} \in s'(\dep{{\tt y}})$.
          
          We finally observe that the $s''(\dep{{\tt x}})$ equivalence class is exactly the set $t_{\tt x}[{\tt y} \mapsto t_{\tt y}]$ for all $t_{\tt x} \in r''(\dep{{\tt x}})$ and all $t_{\tt y} \in s'(\dep{{\tt y}})$.
          Therefore,  $\val(t) = s''({\tt x})$ for each $t \in s''(\dep{\tt x})$.
    
    \item Case $\code{if (}\varphi\code{) \{ p}_1\code{ \} else \{ p}_2\code{ \}}$:
    
          Let $\val({\tt p}_1) = \{ s_1' \}$ and $\val({\tt p}_2) = \{ s_2' \}$.
          We have that $s'({\tt x}) = s_1'({\tt x})$ if $\val(\varphi) = \semtrue$ and $s'({\tt x}) = s_2'({\tt x})$ otherwise.
          
          For all terms $t \in s'(\dep{{\tt x} })$ we have $t \eqterm \condterm{\varphi}{t_1}{t_2}$, with (by induction) $\val(t_1) = s_1'({\tt x})$ and $\val(t_2) = s_2'({\tt x})$.          
          If $\val(\varphi) = \semtrue$ we have that $\val(\condterm{\varphi}{t_1}{t_2}) = \val(t_1) = s_1'({\tt x}) = s'({\tt x})$, otherwise we have that $\val(\condterm{\varphi}{t_1}{t_2}) = \val(t_2) = s_2'({\tt x}) = s'({\tt x})$.
          
    \item Case $\code{while (}\varphi\code{) \{p\}}$:
    
          Directly by definition of the $\mathcal{W}$ function symbols, for all terms in $t \in s'(\dep{{\tt x}})$ we have $\val(t) = s'({\tt x})$.
  \end{itemize}
\end{proof}

\begin{lemma}
  For all well-typed terms $t$; $\depset(t, {\tt ;}) \subseteq \vars(t)$.
  \label{app:lemma:varsubset}
\end{lemma}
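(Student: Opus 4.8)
The plan is to unfold both sides of the inclusion on the empty program ${\tt ;}$ (which acts as the identity, so $\val({\tt ;}) = \{s\}$ for every state $s$) and reduce the claim to a purely semantic fact about the dependence of a term's value on the program variables that syntactically occur in it. Concretely, by Definition~\ref{def:termdeps} the set $\depset(t, {\tt ;})$ is the \emph{smallest} set $D \subseteq \PVars$ such that, for all $M$, $\beta$ and all states $s_1, s_2$, if $s_1({\tt y}) = s_2({\tt y})$ for every ${\tt y} \in D$, then $\mathit{val}_{M,s_1,\beta}(t) = \mathit{val}_{M,s_2,\beta}(t)$ (note $s_i' = s_i$ since the program is ${\tt ;}$). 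To establish the inclusion into $\vars(t)$ it therefore suffices to show that $\vars(t)$ \emph{itself} is a set with this coincidence property; minimality of $\depset(t, {\tt ;})$ then forces $\depset(t, {\tt ;}) \subseteq \vars(t)$.

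So the real content is the \emph{coincidence lemma} for terms: if $s_1$ and $s_2$ agree on all program variables in $\vars(t)$, then $\mathit{val}_{M,s_1,\beta}(t) = \mathit{val}_{M,s_2,\beta}(t)$. First I would prove this by structural induction on $t$ following the term grammar from the Syntax definition and the clauses of $\val$ in Fig.~\ref{fig:evaluation-of-terms} (extended, if needed, by Fig.~\ref{fig:evaluation:extended:progs}): the base case ${\tt x}$ is immediate since ${\tt x} \in \vars({\tt x})$ and $\val$ reads $s({\tt x})$; the base case $y$ (a logical variable) is immediate since $\val(y) = \beta(y)$ does not touch the state; for $f(t_1,\dots,t_n)$ and $t_1 \bullet t_2$ one uses $\vars(t_i) \subseteq \vars(f(t_1,\dots,t_n))$ and applies the induction hypothesis componentwise, then that $\Interpretation(f)$ is a fixed function; and for $\condterm{\varphi}{t_1}{t_2}$ one needs the analogous coincidence statement for formulas, which is proved by a simultaneous induction (quantifiers over logical variables are fine because they only alter $\beta$, not $s$). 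The $\applyUp{u}{t}$ case does not arise here because terms appearing in this context (we are talking about $\depset(t,{\tt ;})$ for the security condition) contain no updates; if full generality is wanted one would additionally argue that $\val(u)$ depends only on the variables occurring in $u$, and $\vars(\applyUp{u}{t})$ is defined to collect those.

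The main obstacle is essentially bookkeeping rather than depth: one has to set up the simultaneous term/formula coincidence induction cleanly and make sure the definition of $\vars(\cdot)$ (collecting program variables, recursing through subterms, through $\condterm{}{}{}$, and through the guard of a conditional term) lines up with every clause of $\val$ — in particular that $\vars$ descends into the condition $\varphi$ of $\condterm{\varphi}{t_1}{t_2}$, otherwise the conditional case breaks. Once the coincidence lemma is in hand, the closing argument is one line: $\vars(t)$ satisfies the defining property of the smallest set $\depset(t,{\tt ;})$, hence $\depset(t,{\tt ;}) \subseteq \vars(t)$.
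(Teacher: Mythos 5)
Your proposal is correct and matches the paper's own argument: the paper's proof is precisely the observation that $\val$ evaluates a term using only the program variables syntactically occurring in it (your coincidence lemma), from which the inclusion follows by minimality of $\depset(t,{\tt ;})$. You merely spell out the structural induction and the minimality step that the paper leaves implicit.
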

\begin{proof}
  Our valuation function $\val$ determines the value of terms only by the program variables that actually occur in that term.
  Therefore, all dependencies of a term are present in the syntactic representation of that term.
\end{proof}

\noindent
We can now show that for all programs {\tt p} and program variables {\tt x}, 
for any state $s$, first-order structure $M$, and variable assignment $\beta$
with $\val({\tt p}) = \{s'\}$ and $s(\dep{\tt y}) = \ECsem{\tt y}$ for all ${\tt y} \in \PVars$,
it holds that $\depset({\tt x}, {\tt p}) \subseteq \vars(t)$ for each $t \in s'(\dep{\tt x})$.

\begin{proof}
  By Lemma~\ref{app:lemma:varsubset}, this is implied by showing that $\depset({\tt x}, {\tt p}) \subseteq \depset(t, {\tt ;})$.
  By definition of $\depset$; for any two states $s_1$, $s_2$ we have if $s_1({\tt y}) = s_2({\tt y})$ for all ${\tt y} \in \depset(t, {\tt ;})$, then $\mathit{val}_{M,s_1,\beta}(t) = \mathit{val}_{M,s_2,\beta}(t)$.
  
  Let $\mathit{val}_{M,s_i,\beta}({\tt p}) = \{ s_i' \}$.
  By Lemma~\ref{app:lemma:depsame} we have that $s'(\dep{{\tt x}}) = s_1'(\dep{{\tt x}}) = s_2'(\dep{{\tt x}})$, therefore $t \in s_i'(\dep{{\tt x}})$.  
  By Lemma~\ref{app:lemma:depcorrect} we conclude that $\mathit{val}_{M,s_i,\beta}(t) = s_i'({\tt x})$, and since we had $\mathit{val}_{M,s_1,\beta}(t) = \mathit{val}_{M,s_2,\beta}(t)$ we can conclude that $s_1'({\tt x}) = s_2'({\tt x})$.
  Therefore $\depset(t,{\tt ;})$ is a correct over-approximation of the dependencies of {\tt x} under {\tt p}, that is $\depset({\tt x}, {\tt p}) \subseteq \depset(t, {\tt ;})$ (by definition of $\depset$) which is what we had to show.
\end{proof}

\subsection{Soundness}
\label{app:proofs:soundness}

Given:
\[
  {\it INIT} \keyand
  {\it DEPS} \keyimplies
  [{\tt p}] \deps(\dep{\tt x})) \subseteq {\it LVL}({\tt x})
\]
If the declassification environment is empty, then {\it DEPS} is $\forall \ECS e.\ \deps(e) \keyeq h(e)$.
Then $\deps(\dep{\tt x}) = \vars(\mathcal{C'}(\dep{\tt x})))$ where $\mathcal{C'}$ selects the syntactic term that happens to represent \dep{\tt x},
which by  Lemma~\ref{lem:xdepcorrect} gives us that $\depset({\tt x}, {\tt p}, \declenv) \subseteq {\it LVL}({\tt x})$ which implies non-interference with declassification.

If the declassification environment is not empty, for each declassification pair $(\varphi, t)$ then some program variables occurring in $t$ may be removed from the set returned by $\deps$.
However, this only happens if $\varphi$ holds, therefore these would also be excluded by definition of $\depset$ and still imply non-interference with declassification.

\end{document}